\def\x{2.9}
\newcommand{\mc}[1]{\mathcal{#1}}
\newcommand{\mb}[1]{\mathbb{#1}}
\newcommand{\mf}[1]{\mathfrak{#1}}
\newcommand{\bigslant}[2]{{\raisebox{.1em}{$#1$}\!\!\left/\raisebox{-.1em}{$#2$}\right.}}
\newcommand{\rey}[1]{\mathcal{R}_{#1}}
\newcommand{\alia}{\mf{A}}
\newcommand{\roots}{\Phi}
\newcommand{\splitk}{\mb{C}}
\newcommand{\triv}{\epsilon}
\newcommand{\natrep}{U}
\newcommand\regchar[1]{\chi_{\splitk{#1}}}
\newcommand{\rs}{\mb{X}}
\newcommand{\cp}{\mb{CP}^1}
\newcommand{\mero}[1][\cp]{\mathcal{M}(#1)}
\newcommand{\eval}[1]{\mathrm{ev}_{#1}}
\renewcommand{\div}[1]{\text{div}\!\left(#1\right)}
\newcommand{\ord}[2]{\text{ord}_{#1}\!\left(#2\right)}
\newcommand{\Id}{\mathrm{Id}}
\newcommand{\id}{\mathrm{id}}
\newcommand{\tr}{\mathrm{tr}\,}
\newcommand{\GL}{\mathrm{GL}}
\newcommand{\SL}{\mathrm{SL}}
\newcommand{\PSL}{\mathrm{PSL}}
\newcommand{\PGL}{\mathrm{PGL}}
\newcommand{\Aut}[1]{\mathrm{Aut}\!\left(#1 \right)}
\newcommand{\Int}[1]{\mathrm{Int}\!\left(#1 \right)}
\newcommand{\ad}{\mathrm{ad}}
\newcommand{\lcm}{\mathrm{\,lcm}}
\newcommand{\codim}{\mathrm{codim}\,}
\newcommand{\diag}{\mathrm{diag}}
\newcommand{\rank}{\mathrm{rank}\,}
\newcommand{\al}{{a}}
\newcommand{\be}{{b}}
\newcommand{\ga}{{c}}
\newcommand{\fal}{F_\al}
\newcommand{\fbe}{F_\be}
\newcommand{\fga}{F_\ga}
\newcommand{\tal}{{\theta_\al}}
\newcommand{\tbe}{{\theta_\be}}
\newcommand{\tga}{{\theta_\ga}}
\newcommand{\ti}{{\theta_i}}
\newcommand{\nal}{{\nu_\al}}
\newcommand{\nbe}{{\nu_\be}}
\newcommand{\nga}{{\nu_\ga}}
\newcommand{\dal}{d_\al}
\newcommand{\dbe}{d_\be}
\newcommand{\dga}{d_\ga}
\newcommand{\Gal}{\Gamma_\al}
\newcommand{\Gbe}{\Gamma_\be}
\newcommand{\Gga}{\Gamma_\ga}
\newcommand{\ii}{\mb{I}_i}
\newcommand{\ial}{\mb{I}_\al}
\newcommand{\ibe}{\mb{I}_\be}
\newcommand{\iga}{\mb{I}_\ga}
\newcommand{\finitegroupfont}{\mathsf}
\newcommand{\zn}[1]{\mb{Z}/#1\mb{Z}}
\newcommand{\cg}[1]{\finitegroupfont{C}_{#1}}
\newcommand{\dg}[1]{\finitegroupfont{D}_{#1}}
\newcommand{\tg}{\finitegroupfont{T}}
\newcommand{\og}{\finitegroupfont{O}}
\newcommand{\yg}{\finitegroupfont{Y}}
\newcommand{\bt}{\finitegroupfont{T}^{\flat}}
\newcommand{\bo}{\finitegroupfont{O}^{\flat}}
\newcommand{\by}{\finitegroupfont{Y}^{\flat}}
\newcommand{\bti}{\finitegroupfont{T}_{1}}
\newcommand{\btii}{\finitegroupfont{T}_{2}}
\newcommand{\btiii}{\finitegroupfont{T}_{3}}
\newcommand{\btiiii}{\finitegroupfont{T}^{\flat}_{4}}
\newcommand{\btiiiii}{\finitegroupfont{T}^{\flat}_{5}}
\newcommand{\btiiiiii}{\finitegroupfont{T}^{\flat}_{6}}
\newcommand{\btiiiiiii}{\finitegroupfont{T}_{7}}
\newcommand{\boi}{\finitegroupfont{O}_{1}}
\newcommand{\boii}{\finitegroupfont{O}_{2}}
\newcommand{\boiii}{\finitegroupfont{O}_{3}}
\newcommand{\boiiii}{\finitegroupfont{O}^{\flat}_{4}}
\newcommand{\boiiiii}{\finitegroupfont{O}^{\flat}_{5}}
\newcommand{\boiiiiii}{\finitegroupfont{O}_{6}}
\newcommand{\boiiiiiii}{\finitegroupfont{O}_{7}}
\newcommand{\boiiiiiiii}{\finitegroupfont{O}^{\flat}_{8}}
\newcommand{\byi}{\finitegroupfont{Y}_{1}}
\newcommand{\byii}{\finitegroupfont{Y}^{\flat}_{2}}
\newcommand{\byiii}{\finitegroupfont{Y}^{\flat}_{3}}
\newcommand{\byiiii}{\finitegroupfont{Y}_{4}}
\newcommand{\byiiiii}{\finitegroupfont{Y}_{5}}
\newcommand{\byiiiiii}{\finitegroupfont{Y}_{6}}
\newcommand{\byiiiiiii}{\finitegroupfont{Y}^{\flat}_{7}}
\newcommand{\byiiiiiiii}{\finitegroupfont{Y}_{8}}
\newcommand{\byiiiiiiiii}{\finitegroupfont{Y}^{\flat}_{9}}
\newtheorem{Theorem}{Theorem}[section]
\newtheorem{Proposition}[Theorem]{Proposition}
\newtheorem{Lemma}[Theorem]{Lemma}
\newtheorem{Corollary}[Theorem]{Corollary}
\newtheorem{Definition}[Theorem]{Definition}
\newtheorem{Example}[Theorem]{Example}
\newtheorem{Remark}[Theorem]{Remark}
\title{Hereditary Automorphic Lie Algebras}
\date{}
\author{Vincent Knibbeler$^{a,b}$, Sara Lombardo$^{a}$ and Jan A. Sanders$^{b}$ \\
\\
$^{a}$Department of Mathematical Sciences, School of Science\\
Loughborough University, Schofield Building, LE11 3TU, Loughborough, UK\\
\and
\\
$^{b}$Department of Mathematics, Faculty of Sciences\\
Vrije Universiteit, De Boelelaan 1081a, 1081 HV Amsterdam, The Netherlands
}
\begin{document}
\maketitle
\begin{abstract}
We show that  Automorphic Lie Algebras which contain a Cartan subalgebra with a constant spectrum, called \emph{hereditary}, are completely described by 2-cocycles on a classical root system taking only two different values. This observation suggests a novel approach to their classification. By determining the values of the cocycles on opposite roots, we obtain the Killing form and the abelianisation of the Automorphic Lie Algebra.
The results are obtained by studying equivariant vectors on the projective line. As a byproduct, we describe a method to reduce the computation of the infinite dimensional space of said equivariant vectors to a finite dimensional linear computation and the determination of the ring of automorphic functions on the projective line.
\vspace{1cm}

\noindent\textbf{AMS  Subject Classification Numbers}
\\13A50 (commutative algebra: general commutative ring theory: actions of groups on commutative rings; invariant theory); 
\\17B05 (nonassociative rings and algebras: Lie algebras and Lie superalgebras: structure theory);
\\17B65 (nonassociative rings and algebras: Lie algebras and Lie superalgebras: infinite-dimensional Lie (super)algebras);
\\17B80 (nonassociative rings and algebras: Lie algebras and Lie superalgebras: applications to integrable systems);

\vspace{1cm}
\noindent\textbf{Key words}\\
Rational Equivariant Vectors, Automorphic Lie Algebras

\vspace{1cm}
The second author gratefully acknowledges financial support from EPSRC (EP/E044646/1 and EP/E044646/2) and from NWO VENI (016.073.026).
\end{abstract}

\section{Introduction}
Automorphic Lie Algebras \cite{lombardo2004reductions,lombardo2005reduction} 
are 
algebras of meromorphic maps from the projective line $\cp$ to a simple complex Lie algebra $\mf{g}$ which are equivariant with respect to a discrete group $G$ embedded in $\Aut{\cp}$ and $\Aut{\mf{g}}$. 
Furthermore, the poles of these maps are confined to a ($G$-invariant) set $\Gamma\subset\cp$. In a more abstract context, Automorphic Lie Algebras are also known 
as Equivariant Map Algebras \cite{neher2012irreducible}, the latter enjoying a more general definition.
They were introduced as a tool for the classification of integrable partial differential equations in $1+1$ dimensions. However, being generalisations of the remarkably fruitful theory of (twisted) loop algebras, there are many more application opportunities for Automorphic Lie Algebras. 
Moreover, the initial classification results reveal isomorphisms between Automorphic Lie Algebras which were unexpected, 
despite the set up involving only classical areas of mathematics, making an explanation interesting within Lie theory.

The work reported in this paper is motivated by the desire to understand patterns revealed by explicit computations of these algebras. Indeed, the
computational programme \cite{knibbeler2017higher} unveiled the Lie structure of various Automorphic Lie Algebras, which was, to a large extent, determined by integers $\kappa_\al$, $\kappa_\be$ and $\kappa_\ga$ counting the automorphic functions $\ial$, $\ibe$ and $\iga$ appearing in the structure constants (in a suitable basis). Here $\ial$, $\ibe$ and $\iga$ are the primitive automorphic functions that vanish on the orbits with nontrivial isotropy in the reduction group. 
Collecting these numbers lead to the surprising observation that they are independent of the finite group used to define the Automorphic Lie Algebra (at least, for all the algebras computed at the time).
Meanwhile, investigating Lie algebras of the form $\mf{g}^\theta=\{x\in\mf{g}\,|\,\theta x=x\}$, where $\mf{g}$ is a simple complex Lie algebra and $\theta$ a generator of a polyhedral group $G$ acting on $\mf{g}$, resulted in a list of numbers that was very similar to the former list of numbers $\kappa_\al$, $\kappa_\be$ and $\kappa_\ga$. A formula relating the two was quickly obtained. In this paper we get to the heart of these observations and prove the formula in greater generality.

In order to explain and extend the observed phenomena, we study equivariant vectors on the projective line.
Equivariant vectors play an important role in various branches of mathematics. They are natural objects of study in representation theory and invariant theory, and they received particular attention in the theory of dynamical systems (e.g.~equivariant bifurcations, cf.~\cite{golubitsky1988singularities,golubitsky2002the,golubitsky1985singularities,dias2009hopf,antonelli2008invariants}), as well as in mathematical physics, where rational maps can be used to understand the structure of solutions of the Skyrme model \cite{houghton1998rational}, known as Skyrmions (e.g. \cite{manton2004topological} and references therein).
Obtaining explicit descriptions of the full space of equivariant vectors is usually a substantial computational problem, and various methods have been developed to carry out such computations effectively \cite{gatermann2000computer}. This paper provides information on the space of equivariant vectors that do not require computational effort. In particular, we find the determinant of a basis (Theorem \ref{thm:determinant of invariant vectors}). This determinant of invariant vectors establishes a formula to obtain the aforementioned integers $\kappa_\al$, $\kappa_\be$ and $\kappa_\ga$ in a simple fashion, and thereby explains a significant part of the Lie structure of Automorphic Lie Algebras. But a stronger result is found. Using the description of the evaluation of equivariant vectors (Lemma \ref{lem:evaluating invariant vectors}) from \cite{neher2012irreducible}, we find that an Automorphic Lie Algebra with a constant spectrum CSA, a hereditary Automorphic Lie Algebra hereafter, is described by a $2$-cocycle $\omega^2$ on the root system $\roots$ of $\mf{g}$ (Theorem \ref{thm:alia normal form}). 
The triviality of the second cohomology group \cite{knibbeler2019cohomology} then implies in particular that Automorphic Lie Algebras with poles in all exceptional orbits are nothing but current algebras of $\mf{g}$ (generalising known facts about twisted loop algebras). Including the information of the determinant of invariant vectors yields a partial description of $\omega^2$, enough to describe the Killing form of the Automorphic Lie Algebra and its abelianisation (as well as $\kappa_i=\frac{1}{2}\sum_{\alpha\in\roots}\omega^2(\alpha,-\alpha)$). Moreover, it shows that $\omega^2$ can only take two values, thereby reproducing isomorphisms obtained in \cite{lombardo2004reductions,lombardo2005reduction,lombardo2010on,bury2010automorphic,knibbeler2014automorphic,knibbeler2017higher}.
The challenge remains to find the complete description of $\omega^2$, and hence the Automorphic Lie Algebra, from the embedding of the reduction group into $\Aut{\mf{g}}$.
 
The present paper, together with \cite{knibbeler2019cohomology,knibbeler2014invariants}, is part of a programme to describe and classify Automorphic Lie Algebra theoretically, as opposed to the computational classification programme which is described in \cite{lombardo2004reductions,lombardo2005reduction,lombardo2010on,bury2010automorphic,knibbeler2014automorphic,knibbeler2017higher}. 
The two approaches reinforce each other.
The nature of this paper is more computational than is customary in this area of mathematics, because it goes hand in glove with the actual computation of Automorphic Lie Algebras, which needs these mathematical facts in order to stand on firmer ground.

This paper is organised as follows. In Section \ref{sec:polyhedral groups} we review finite groups of automorphisms of the projective line, and prove a very useful formula for their representations, relating dimensions of subspaces fixed by stabiliser subgroups (Lemma \ref{lem:dimV^g}). In Sections \ref{sec:homogenisation} through \ref{sec:divisor of invariant vectors} we describe equivariant vectors on the projective line. We involve Lie algebras only in Section \ref{sec:alias by cocycles} were the results on equivariant vectors are used to study the structure of Automorphic Lie Algebras. Finally, we include an appendix with several explicit calculations illustrating the theoretical results on equivariant vectors.

\section{Polyhedral Groups}
 \label{sec:polyhedral groups}
A polyhedral group $G$ is a finite group of rotations in $3$-dimensional space. Restricting to a $2$-sphere identified with the Riemann sphere $\overline{\mb{C}}=\mb{C}\cup\{\infty\}$, rotations take the form of M\"obius transformations $\lambda\mapsto \frac{a\lambda+b}{c\lambda+d}$. Another equivalent way to write this is in terms of the complex projective line $\cp$ (the space of equivalence classes $[X:Y]$, where $X$ and $Y$ are complex numbers, not both zero, and $[X_0:Y_0]=[X:Y]$ if and only if there is a nonzero complex number $z$ such that $(zX_0,zY_0)=(X,Y)$). We identify the Riemann sphere with the projective line using the isomorphism given by $\lambda\mapsto[\lambda:1]$, for $\lambda\in\mb{C}$ and $\infty\mapsto [1:0]$. The M\"obius transformations $\lambda\mapsto \frac{a\lambda+b}{c\lambda+d}$ then correspond to $[X:Y]\mapsto [aX+bY:cX+dY]$.
 
Polyhedral groups are well studied and described, cf.~\cite{dolgachev2009mckay,klein1956lectures,klein1993vorlesungen,toth2002finite}.
Consider the action of $G$ on $\cp$. Let $\Omega\ni i$ be an index set for the $G$-orbits $\Gamma_i$ of elements with nontrivial stabiliser groups (nontrivial isotropy). Such orbits will be called \emph{exceptional orbits} hereafter. Moreover, let $d_i=|\Gamma_i|$ be the size of an exceptional orbit and $\nu_i$ the order of an associated stabiliser subgroup. In particular
\[d_i\nu_i=|G|,\qquad i\in\Omega.\]
The classification of polyhedral groups is obtained through the formula 
\begin{equation}
\label{eq:finite subgroups of SO(3)}
2\left(1-\frac{1}{|G|}\right)=\sum_{i\in\Omega}\left(1-\frac{1}{\nu_i}\right).
\end{equation}
This is the genus zero case of the Riemann-Hurwitz formula, but it can be obtained via a more elementary analysis.
From this formula it follows that there are either $2$ or $3$ exceptional orbits. If $|\Omega|=2$ one finds the cyclic groups and if $|\Omega|=3$ one obtains the orientation preserving symmetry groups of the Platonic solids and regular polygons embedded in $\mb{R}^3$. 
Table \ref{tab:various properties of polyhedral groups} contains various relevant numbers for all polyhedral groups, such as their exponent $\|G\|=\min\{n\in\mb{N}\;|\;g^n=1,\,\forall g\in G\}$ and the order of their Schur multiplier $|M(G)|$.
\begin{center}
\begin{table}[ht!] 
\caption{Groups $G$, orders, exponent $\|G\|$, Schur multiplier $M(G)$ and Abelianisation $\mc{A} G$.}
\label{tab:various properties of polyhedral groups}
\begin{center}
\begin{tabular}{llllllll}\hline
$G$&$|\Omega|$&$(\nal, \nbe(,\nga))$&$(\dal, \dbe(,\dga))$&$|G|$&$\|G\|$&$|M(G)|$&$\mc{A} G$\\
\hline
$\cg{N}$&$2$&$(N,N)$&$(1,1)$&$N$&$N$&$1$&$\cg{N}$\\
$\dg{N=2M-1}$&$3$&$(N,2,2)$&$(2,N,N)$&$2N$&$2N$&$1$&$\cg{2}$\\
$\dg{N=2M}$&$3$&$(N,2,2)$&$(2,N,N)$&$2N$&$N$&$2$&$\cg{2}\times \cg{2}$\\
$\tg$&$3$&$(3,3,2)$&$(4,4,6)$&$12$&$6$&$2$&$\cg{3}$\\
$\og$&$3$&$(4,3,2)$&$(6,8,12)$&$24$&$12$&$2$&$\cg{2}$\\
$\yg$&$3$&$(5,3,2)$&$(12,20,30)$&$60$&$30$&$2$&$1$\\
\hline 
\end{tabular}
\end{center}
\end{table}
\end{center}
Most attention will go to the non-cyclic groups, the groups with three exceptional orbits. For convenience we fix in this situation 
\[\Omega=\{\al,\be,\ga\},\]
choose $\nal\ge\nbe\ge\nga$ and use the presentation 
\begin{equation}
\label{eq:presentation of G}
G=\langle \tal,\, \tbe,\, \tga\;|\; \tal^{\nal}=\tbe^{\nbe}=\tga^{\nga}=\tal \tbe \tga=1\rangle.
\end{equation}
Notice the Euler characteristic of the sphere, $\dal-\dga+\dbe=2$.

Representations of polyhedral groups have the following curious property. The special case of the icosahedral group is discussed by Lusztig \cite{lusztig2003homomorphisms} where it is attributed to Serre. The general proof given in \cite{knibbeler2014invariants} is included below.
\begin{Lemma}
\label{lem:dimV^g}
If $G$ is a noncyclic polyhedral group and $V$ a $G$-module, then
\[\dim V+2\dim V^{G}=\sum_{i=\al,\be,\ga}\dim V^{\langle \ti \rangle}\]
where $\ti$ is a generator of a stabilizer subgroup at an exceptional orbit $\Gamma_i$.
\end{Lemma}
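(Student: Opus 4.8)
The plan is to reduce everything to character theory and then exploit a single geometric fact about Möbius transformations. Write $\chi$ for the character of $V$, so that $\dim V=\chi(1)$ and, for any subgroup $H<G$, the projection formula gives $\dim V^{H}=\frac{1}{|H|}\sum_{h\in H}\chi(h)$. Applying this to $H=G$ and to the cyclic stabilizers $\langle g_i\rangle$ (of order $\nu_i$) rewrites both sides of the asserted identity entirely in terms of character sums. The geometric input I would use is that every nontrivial element of $G$, being an elliptic Möbius transformation of finite order, fixes exactly two points of $\cp$, and that any fixed point of a nontrivial element has nontrivial stabilizer, hence lies in one of the exceptional orbits $\Gamma_i$.

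The core of the argument is a weighted double count of the set of pairs $(g,P)$ with $g\in G\setminus\{1\}$ and $g$ fixing $P\in\cp$, each pair counted with weight $\chi(g)$. Summing first over $g$ and using that each nontrivial $g$ has exactly two fixed points gives $2\sum_{g\neq 1}\chi(g)=2\left(|G|\dim V^{G}-\dim V\right)$. Summing first over $P$, only the exceptional points contribute; grouping the $d_i$ points of $\Gamma_i$ and using that the nontrivial elements of a stabilizer $\mathrm{Stab}(P)$ form a conjugate of $\{g_i,\dots,g_i^{\nu_i-1}\}$ (so their $\chi$-values agree, $\chi$ being a class function) yields $\sum_{i\in\Omega}d_i\left(\nu_i\dim V^{\langle g_i\rangle}-\dim V\right)$. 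I would then equate the two counts and simplify using $d_i\nu_i=|G|$.

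What remains is to check that the bookkeeping produces exactly the coefficient $|\Omega|-2$. After dividing by $|G|$, the equated counts read $\sum_{i\in\Omega}\dim V^{\langle g_i\rangle}=2\dim V^{G}+\frac{\dim V}{|G|}\left(\sum_{i\in\Omega}d_i-2\right)$, so the whole statement comes down to the numerical identity $\sum_{i\in\Omega}d_i-2=(|\Omega|-2)|G|$. This is precisely a repackaging of the classification formula \eqref{eq:finite subgroups of SO(3)}: rewriting it as $2(|G|-1)=\sum_{i\in\Omega}(|G|-d_i)$ and solving for $\sum_{i\in\Omega}d_i$ gives exactly what is needed, uniformly for both $|\Omega|=2$ and $|\Omega|=3$.

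I expect the main obstacle to be the justification of the geometric and class-function claims rather than the algebra: one must be sure that every nontrivial element is elliptic with two fixed points (which holds because $G$ is finite, so no element is parabolic), and that the inner character sum over a stabilizer is independent of the choice of generator and depends only on the conjugacy class of $\langle g_i\rangle$. Once these are in place the identity follows formally, with formula \eqref{eq:finite subgroups of SO(3)} supplying the final numerical coincidence.
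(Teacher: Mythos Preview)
Your argument is correct. The weighted double count of pairs $(g,P)$ with $g\neq 1$ and $gP=P$, using the weight $\chi(g)$, is exactly the right device: summing over $g$ first uses that each nontrivial finite-order M\"obius transformation is elliptic with precisely two fixed points, and summing over $P$ first uses that stabilisers of points in the same orbit are conjugate, hence have the same fixed-space dimension. The algebra you sketch checks out line by line, and the final reduction to the numerical identity $\sum_{i\in\Omega}d_i-2=(|\Omega|-2)|G|$ is precisely equivalent to equation~\eqref{eq:finite subgroups of SO(3)} after multiplying through by $|G|$ and rearranging.

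As for comparison with the paper: the paper does not actually prove this lemma in the text; it simply refers the reader to \cite{knibbeler2014invariants}. So there is no in-paper argument to compare your approach against. What can be said is that your method is the natural character-theoretic refinement of the orbit-counting argument that yields \eqref{eq:finite subgroups of SO(3)} itself (take $\chi=\triv$ and you recover that formula), and it is self-contained and elementary. The only points that deserve a sentence of justification in a polished write-up are the ones you already flagged: that a nontrivial element of a finite subgroup of $\Aut(\cp)$ cannot be parabolic and hence has exactly two fixed points, and that $\dim V^{\langle g_i\rangle}$ depends only on the orbit $\Gamma_i$ because stabilisers along an orbit are conjugate.
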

\begin{proof}
The proof is based on the observation that every nontrivial rotation of the sphere leaves exactly two points fixed. In other words, all nontrivial group elements of $G<\Aut{\cp}$ occur in precisely two stabiliser subgroups $\{G_\lambda\,|\,\lambda\in\cp\}$. We rewrite the Reynolds operator $\rey{G}$ (or averaging operator) of $G$ by rearranging the terms, based on the mentioned observation;
\begin{align*}
\rey{G}&=\frac{1}{|G|}\sum_{\theta\in G}\theta
\\&=\frac{1}{2|G|}\sum_{i=\al,\be,\ga}\sum_{\lambda \in \Gamma_i}\sum_{\theta\in G_{\lambda_i}}\theta
-\frac{1}{2|G|}\sum_{i=\al,\be,\ga}|\Gamma_i|\,\id + \frac{1}{|G|}\,\id
\\&=\frac{1}{2}\sum_{i=\al,\be,\ga}\frac{\nu_i}{|G|}\sum_{\lambda \in \Gamma_i}\frac{1}{\nu_i}\sum_{\theta\in G_{\lambda_i}}\theta
-\frac{1}{2|G|}\left(\sum_{i=\al,\be,\ga}d_i-2\right)\id.
\end{align*}
We multiply this expression by $2$, and recognise the Reynolds operator for the stabiliser subgroups $\rey{G_{\lambda_i}}=\frac{1}{\nu_i}\sum_{\theta\in G_{\lambda_i}}\theta$. Moreover, we use the Riemann-Hurwitz formula (\ref{eq:finite subgroups of SO(3)}) rewritten to the form $\sum_{i\in \Omega}d_i-2=(|\Omega|-2)|G|$
and thus obtain
\[
2\rey{G}=\sum_{i=\al,\be,\ga}\frac{1}{d_i}\sum_{\lambda \in \Gamma_i}\rey{G_{\lambda}}
-\id.\]
To obtain the statement of the Lemma, one can now simply take the trace $\tr \rey{G}=\dim V^G$, keeping in mind that stabilizer subgroups of elements in one orbit are conjugate.
\end{proof}
\begin{Corollary}
If $V$ is a representation of a noncyclic polyhedral group (\ref{eq:presentation of G}) without trivial summand then \[V=V^{\langle \tal \rangle}\oplus V^{\langle\tbe\rangle}\oplus V^{\langle \tga\rangle}.\]
\end{Corollary}

\section{Rational Functions as a Quotient of a Graded Ring}
\label{sec:homogenisation}
In this section we describe rational functions on the projective line in terms of forms in two variables and subsequently express the rational functions with restricted poles as a quotient of a graded ring. This construction can be found in the literature, e.g.~\cite{eisenbud1995commutative, miranda1995algebraic}.
The advantage is that it gives the function ring of our interest an explicit $\text{SL}_2(\mb{C})$-module structure, which will prove to be fruitful in Section \ref{sec:squaring the ring}.

In order to describe meromorphic functions on the projective line $\mero$, notice that $\cp$ corresponds to the quotient of $\mb{C}^2\setminus\{0\}$ by the canonical action of $\mb{C}^*$. Thus there is a one to one relation between functions on $\cp$ and functions on $\mb{C}^2\setminus\{0\}$ which are $\mb{C}^*$-invariant. Examples of such functions are quotients of forms on $\mb{C}^2$ of identical degree (where we define a form as a multivariate polynomial where each term has the same degree; the degree of the form).
It is well known that this function field in fact corresponds to the field of meromorphic functions on $\cp$ \cite{miranda1995algebraic}. That is, 
\begin{equation}
\label{eq:mero quotients of forms}
\mero=\left\{[X:Y]\mapsto [P(X,Y):Q(X,Y)]\;|\;P,Q\in(\mb{C}[X,Y])_d,d\in\mb{N}_0\right\}.
\end{equation}

The polyhedral groups acting on $\cp$ have to be replaced by different groups when we translate the action to $\mb{C}^2$.
A linear action of $G$ on $\cp=\left(\mb{C}^2\setminus \{0\}\right)/\mb{C}^\ast$ is obtained by a projective representation $\rho:G\rightarrow \PGL(\mb{C}^2)$. Irreducible such representations can be constructed from irreducible linear representations $H \rightarrow \GL(\mb{C}^2)$ of central extensions $H\twoheadrightarrow G$ by making the diagram
\begin{center}
\begin{tikzcd}
H \arrow[r]\arrow[d,two heads]&\GL(\mb{C}^2)\arrow[d,two heads]\\
G \arrow[r]&\PGL(\mb{C}^2)
\end{tikzcd}
\end{center}
commute. That is, $\rho$ is obtained by choosing any section $G \rightarrow H$ and composing with the other maps. The central extension $H\twoheadrightarrow G$ is called sufficient if all projective representations of $G$ are obtained in this way.

In this paper we choose the binary polyhedral groups $H=G^\flat$ as sufficient extensions. They are defined as the preimages of the polyhedral groups $G\subset \PSL(\mb{C}^2)$ under the quotient map $\SL(\mb{C}^2)\rightarrow\PSL(\mb{C}^2)$. 
The binary polyhedral groups allow the presentation
\[G^\flat=\langle \tal, \tbe, \tga\;|\; \tal^{\nal}=\tbe^{\nbe}=\tga^{\nga}=\tal \tbe \tga\rangle\]
where the numbers $\nu_i$ correspond to the associated polyhedral group $G$. We choose to use the same symbols $\ti$ for generators of different groups, $G$ and $G^\flat$. The context should prevent confusion. 

\begin{Definition}[Ground form \cite{dolgachev2009mckay,klein1956lectures,klein1993vorlesungen,toth2002finite}]
\label{def:ground form}
Let $G<\Aut{\cp}$ and $\Gamma\in\bigslant{\cp}{G}$. Define $F_\Gamma\in\mb{C}[X,Y]$, up to $\mb{C}^*$, as the polynomial with divisor $D_\Gamma=\sum_{\gamma\in\Gamma}\gamma$, i.e. 
\[F_\Gamma(X,Y)=\prod_{[\gamma_x:\gamma_y]\in\Gamma}(\gamma_yX-\gamma_xY).\] Define $\nu_\Gamma=|G||\Gamma|^{-1}$. In the case of an exceptional orbit $\Gamma=\Gamma_i$ we use the short hand notation $F_i=F_{\Gamma_i}$ and $\nu_i=\nu_{\Gamma_i}$, and $F_i$ is called a \emph{ground form}. In general the ground forms are relative invariants of $G^\flat$ (there is a homomorphism  $\chi:G^\flat\rightarrow \mb{C}^\ast$ such that $\theta\cdot F_i=\chi(\theta) F_i$) and $F_\Gamma^{\nu_\Gamma}$ is an invariant.  Of special interest are the automorphic functions \[\ii=\frac{F_i^{\nu_i}}{F_\Gamma^{\nu_\Gamma}}\in\mero^G_\Gamma,\] which can be seen as the meromorphic function on $\bigslant{\cp}{G}$ with divisor  
$\nu_i {\Gamma_i}-\nu_\Gamma  \Gamma$.
\end{Definition}
\begin{Remark}
The term ground form originates from classical invariant theory and describes a (set of) covariant form(s) such that all other covariant forms can be derived
from it by transvection. In the context of this paper this would mean that one restricts one's attention to those relative invariants of minimal order ($F_i$ such that $|\Gamma_i|\le |\Gamma_{j}|$ for all $j\in\Omega$).
\end{Remark}

At this stage we have enough information to give an explicit relation between the algebra $\mero_\Gamma$ we want to study and the graded ring $R$ we will use to do so.
Define \[R_j=(\mb{C}[X,Y])_{jn}.\] That is, $R_j$ is the vector space of binary forms of degree $jn$.
Here $n\in\mb{N}$ is fixed (and will later be replaced by $|G|$). 
We denote by $R$ the direct sum of $R_j$ as abelian groups \[R=\bigoplus_{j\in\mb{Z}}R_j,\quad R_jR_{j'}\subset R_{j+j'},\] which forms a $\mb{Z}$-graded ring. Let $F$ be an element of $R_1$. The localisation of $R$ with respect to the multiplicative set of monomials in $F$ will be denoted by $R[F^{-1}]$. This is another $\mb{Z}$-graded ring. 
The subring of zero-homogeneous elements $(R[F^{-1}])_0$ is the ring of our interest. 

\begin{Lemma}
\label{lem:meromorphic as localisation}
Let $\Gamma\subset \cp$ be the set of zeros of $F\in R_1$. Then the map \[(R[F^{-1}])_0\ni \frac{P(X,Y)}{F(X,Y)^d}\mapsto \left([X:Y]\mapsto [P(X,Y):F(X,Y)^d]\right)\in\mero_\Gamma\] is an isomorphism of rings.
\end{Lemma}
\begin{proof}It is clear that the map $(R[F^{-1}])_0\rightarrow \mero_\Gamma$ is a monomorphism. It is less obvious that it is surjective.
To show this, let $\left([X:Y]\mapsto [P(X,Y):Q(X,Y)]\right)\in\mero_\Gamma$. 
Since the poles of this map are contained in $\Gamma=\{[\gamma_x^1:\gamma_y^1],\ldots,[\gamma_x^k:\gamma_y^k]\}$ we know that $Q$ has the form \[Q(X,Y)=\prod_{i=1}^k(\gamma_y^iX- \gamma_x^iY)^{d_i}\]
where $d_i\in\mb{N}_0$. Define $d=\max\{d_1,\ldots,d_k\}$. Then $F(X,Y)^d=A(X,Y)Q(X,Y)$ for some form $A(X,Y)$ and 
\begin{align*}
[P(X,Y):Q(X,Y)]&=[A(X,Y)P(X,Y):A(X,Y)Q(X,Y)]
\\&=[A(X,Y)P(X,Y):F(X,Y)^d]
\end{align*} 
which establishes surjectivity.
\end{proof}
\begin{Remark}
A straightforward generalisation shows that \[\mero_{\Gamma_1\cup\ldots\cup\Gamma_p}\cong(R[F^{-1}_1,\ldots,F^{-1}_p])_0\] where $\Gamma_1,\ldots,\Gamma_p\subset \cp$ and the zeros of $F_i\in R_1$ correspond to $\Gamma_i$. 
\end{Remark}

Henceforth we will identify $\mero_{\Gamma_1\cup\ldots\cup\Gamma_p}$ with $(R[F_1^{-1},\ldots,F_p^{-1}])_0$ without mention of the isomorphism, and we denote this ring by $S$: \[S=(R[F_1^{-1},\ldots,F_p^{-1}])_0\cong \mero_{\Gamma_1\cup\ldots\cup\Gamma_p}.\]

Now we incorporate an action of a finite group $G$. For any $G$-module $M$, the projection onto the isotypical summand $M^\chi$ associated to the irreducible character $\chi$ has the form $\pi_\chi=\frac{\chi(1)}{|G|}\sum_{\theta\in G}\overline{\chi(\theta)}\theta$. We take $G$ to be a polyhedral group acting on $\cp$, and therefore on $\mero$ by precomposition, and on $\mero_{\Gamma_1\cup\ldots\cup\Gamma_p}$ if and only if $\Gamma_1\cup\ldots\cup\Gamma_p$ is a union of orbits, in which case we may assume each $\Gamma_j$ is an orbit, and each $F_j$ is an invariant form.
The isotypical components of rational functions $\mero_{\Gamma_1\cup\ldots\cup\Gamma_p}^\chi$ that we study are isomorphic to $(R[F^{-1}_1,\ldots,F^{-1}_p])_0^\chi$ by Lemma \ref{lem:meromorphic as localisation}.  Since $G$ preserves the grading of $R$ we have $\pi_\chi(R[F^{-1}_1,\ldots,F^{-1}_p])_0=(\pi_\chi R[F^{-1}_1,\ldots,F^{-1}_p])_0$, and since $F^{-1}_1,\ldots,F^{-1}_p$ are $G$-invariant we have  $(\pi_\chi R[F^{-1}_1,\ldots,F^{-1}_p])_0=((\pi_\chi R)[F^{-1}_1,\ldots,F^{-1}_p])_0$. Thus we obtain 
\begin{equation}
\label{eq:isotypical components as localisation}
S^\chi=(R^\chi [F^{-1}_1,\ldots,F^{-1}_p])_0 \cong \mero_{\Gamma_1\cup\ldots\cup\Gamma_p}^\chi
\end{equation} and we study $R^\chi$ in Section \ref{sec:squaring the ring}.

We end this section with a convenient manner to obtain an isotypical component of rational functions from such a component with fewer poles. First notice that $S^\chi$ is a $S^G$-module. If $\Gamma$ and $\Gamma'$ are two $G$-invariant sets then 
\begin{equation}
\label{eq:more poles using invariants}
S^\chi_{\Gamma'}=S^G_{\Gamma'}S^\chi_{\Gamma}
\end{equation}
if and only if $\Gamma\subset\Gamma'$ (here we have temporarily shown the set of poles in the subscript).

\section{Squaring the Ring}
\label{sec:squaring the ring}
In this section we consider the polynomial ring $\mb{C}[X,Y]=\mb{C}[U]$ where $U$ is the natural representation of a binary polyhedral group $G^\flat$. Through elementary arguments we find the character of the subspaces $\mb{C}[U]_{|G|}$ of fixed degree $|G|=\frac{1}{2}|G^\flat|$, and we show how the character of higher degree subspaces $\mb{C}[U]_{h}$ can easily be recovered from the character of degrees below $|G|$, reducing the problem of describing general $\mb{C}[U]_{h}$ to a finite and straightforward computation of $\mb{C}[U]_{1},\ldots,\mb{C}[U]_{|G|-1}$. 
This will result in a description of the space of invariant vectors as a module over the ring of automorphic functions.

Let  $\|G\|$ denote the exponent of the group $G$, i.e. the least common multiple of the orders of group elements.
\begin{Lemma}
\label{lem:exponent}
Let $G^\flat$ be a binary polyhedral group corresponding to the polyhedral group $G$. Then
\[\|G^\flat\|=2\lcm(\nal, \nbe, \nga)=2\|G\|=\frac{2|G|}{|M(G)|}=\left\{\begin{array}{ll}|G|&\text{if }G\in\left\{\dg{2M},\tg,\og,\yg\right\}\\2|G|&\text{if }G\in\left\{\cg{N},\dg{2M+1}\right\}\end{array}\right.\]
where $M(G)$ is the Schur multiplier of $G$ and $\lcm$ stands for \emph{least common multiple}.
\end{Lemma}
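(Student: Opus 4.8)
The plan is to split the asserted chain into two substantive identities, $\|G\|=\lcm(\nal,\nbe,\nga)$ and $\|G^\flat\|=2\|G\|$, and to read off the remaining equalities from Table~\ref{tab:various properties of polyhedral groups}. I would first establish $\|G\|=\lcm(\nal,\nbe,\nga)$. Every nontrivial element of $G$ is a rotation of the sphere and fixes exactly the two poles of its axis; these poles have nontrivial stabiliser, hence lie in exceptional orbits, and the stabiliser of an exceptional point is cyclic of order $\nu_i$. Thus the order of each $g\in G$ divides some $\nu_i$, while $\nal,\nbe,\nga$ are themselves attained by the generators; since the exponent is the least common multiple of the element orders, $\|G\|=\lcm(\nal,\nbe,\nga)$ (for the cyclic groups this is immediate, as $\nal=\nbe=|G|$). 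This yields $2\lcm(\nal,\nbe,\nga)=2\|G\|$.

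The central step is $\|G^\flat\|=2\|G\|$, which I would prove by two divisibilities using the central extension $1\to\{\pm\Id\}\to G^\flat\to G\to 1$ coming from $\SL(\CC^2)\to\PSL(\CC^2)$. For the upper bound, any $h\in G^\flat$ projects to some $g\in G$ with $g^{\|G\|}=1$, so $h^{\|G\|}\in\{\pm\Id\}$ and therefore $h^{2\|G\|}=\Id$; hence $\|G^\flat\|$ divides $2\|G\|$. For the lower bound I would show that each cyclic stabiliser of order $\nu_i$ lifts to a cyclic subgroup of $G^\flat$ of order $2\nu_i$: diagonalising a suitable lift of $g_i$ as $\diag(\zeta,\zeta^{-1})$ with $\zeta$ a primitive $2\nu_i$-th root of unity gives $\zeta^{\nu_i}=-1$, so $g_i^{\nu_i}=-\Id$, the nontrivial kernel element, forcing $g_i$ to have order exactly $2\nu_i$. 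Consequently $2\nu_i$ divides $\|G^\flat\|$ for every $i$, so $2\|G\|=2\lcm(\nal,\nbe,\nga)$ divides $\|G^\flat\|$, and the two bounds give equality.

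It then remains to identify $2\|G\|$ with $\tfrac{2|G|}{|M(G)|}$ and with the case distinction. The first is the statement $\|G\|=|G|/|M(G)|$, which I would check line by line against Table~\ref{tab:various properties of polyhedral groups}; reading $|M(G)|$ from the same table, $|M(G)|=2$ exactly for $G\in\{\DD_{2M},\TT,\OO,\YY\}$ and $|M(G)|=1$ for $G\in\{\zn{N},\DD_{2M+1}\}$, which turns $2\|G\|=2|G|/|M(G)|$ into the stated closed form. I expect the only real obstacle to be the lower bound above, namely the claim $g_i^{\nu_i}=-\Id\neq\Id$ rather than $g_i^{\nu_i}=\Id$: this is precisely the assertion that the double cover does not split over the cyclic stabilisers, and it is what makes $G^\flat$ strictly larger in exponent than $G$. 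Everything else is either a standard fact about fixed points of rotation groups or bookkeeping with the table.
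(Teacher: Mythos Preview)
Your strategy matches the paper's: bound $\|G^\flat\|$ above by observing that every element of $G^\flat$ lies in the preimage of some stabiliser $G_\lambda$ (the paper phrases this as ``$G^\flat$ is covered by $\pi^{-1}G_\lambda$'', each of order $2\nu_i$), bound it below by showing each generator $g_i\in G^\flat$ has order exactly $2\nu_i$, and read off the remaining equalities from Table~\ref{tab:various properties of polyhedral groups}. Your extra step of first isolating $\|G\|=\lcm(\nu_\al,\nu_\be,\nu_\ga)$ is fine but not needed; the paper works directly with $G^\flat$.

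The one place your proposal is incomplete is exactly where you flag it: the claim $g_i^{\nu_i}=-\Id$. As written, your diagonalisation argument is circular, since taking $\zeta$ to be a primitive $2\nu_i$-th root of unity is already the conclusion; a priori a lift of $g_i$ could have $\zeta$ a primitive $\nu_i$-th root and hence order $\nu_i$. The paper resolves this cleanly via the presentation
\[G^\flat=\langle g_\al,g_\be,g_\ga\mid g_\al^{\nu_\al}=g_\be^{\nu_\be}=g_\ga^{\nu_\ga}=g_\al g_\be g_\ga\rangle:\]
the element $g_\al g_\be g_\ga$ is the nontrivial element of $Z(G^\flat)\cong\zn{2}$, so each $g_i^{\nu_i}$ equals this central involution and $g_i$ has order $2\nu_i$. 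This is more direct than appealing to non-splitting of the double cover over cyclic subgroups, which is an equivalent but less self-contained statement.
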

\begin{proof}
We prove that $\|G^\flat\|=2\lcm(\nal, \nbe, \nga)$. The other equalities can be found in Table $\ref{tab:various properties of polyhedral groups}$. The polyhedral group
$G$ is covered by stabiliser subgroups $G_\lambda$, $\lambda\in\cp$. Thus, if $\pi:G^\flat\rightarrow G$ is any extension with kernel $Z$,  then $G^\flat$ is covered by the preimages $\pi^{-1}G_\lambda$, $\lambda\in\cp$. The order of an element $h\in G^\flat$ thus divides the order of the subgroup $\pi^{-1}G_\lambda$ containing it, which is $|Z|\nu_i$. Hence the exponent divides the least common multiple of these;
\[\|G^\flat\|\;|\;\lcm(|Z|\nal, |Z|\nbe, |Z|\nga)=|Z|\lcm(\nal, \nbe, \nga).\]
On the other hand, if we assume that $G^\flat$ is the binary polyhedral group then it is clear from the presentation $G^\flat=\langle \tal, \tbe, \tga\;|\; \tal^{\nal}=\tbe^{\nbe}=\tga^{\nga}=\tal \tbe \tga\rangle$ and the fact that $\tal\tbe\tga\in Z(G^\flat)=\cg{2}$ that the order of $\ti$ is $2\nu_i$, so that
\[2\lcm(\nal, \nbe, \nga)=\lcm(2\nal,2 \nbe, 2\nga)\;|\;\|G^\flat\|\] 
and $\|G^\flat\|=2\lcm(\nal, \nbe, \nga)$ as desired.
\end{proof}
Notice that if $G^\flat$ is a Schur cover, equal to the binary polyhedral group when possible, then $\|G^\flat\|=|G|$. For now we wish to use some results that are specific to $\SL_2(\mb{C})$.
Let $\chi$ be the character of the natural representation of $G^\flat$, i.e.~the monomorphism $\sigma:G^\flat\rightarrow \SL(\natrep)=\SL_2(\mb{C})$, and denote its symmetric tensor of degree $h$ by 
\newcommand{\sym}{\text{Sym}}
\[\chi_h=\chi_{\sym^hU}.\]
The \emph{Clebsch-Gordan decomposition} \cite{fossum1980invariant} for $\SL_2(\mb{C})$-modules
\begin{equation}
\label{eq:CG}
\natrep\otimes \sym^h\natrep=\sym^{h+1}\natrep\oplus \sym^{h-1}\natrep, \quad h\ge 2,
\end{equation} 
can be conveniently used to find the decomposition of $\chi_h$ when $h$ is a multiple of $\|G^\flat\|$.
To this end, we write the Clebsch-Gordan decomposition in terms of the characters 
\begin{equation}
\label{eq:CGmatrix}
\begin{pmatrix} \chi_h\\\chi_{h-1}\end{pmatrix}=
\begin{pmatrix}\chi&-1\\1&0\end{pmatrix}
\begin{pmatrix} \chi_{h-1}\\\chi_{h-2}\end{pmatrix}
\end{equation}
with boundary conditions
\[\chi_{-1}=0,\qquad \chi_0=\triv,\]
where $\triv$ is the trivial character.
The function $\chi_{-1}$ might not be defined by a symmetric tensor product but it gives a convenient boundary condition resulting in the correct solution, with $\chi_1=\chi$. 

\begin{Lemma}\label{lem:character degree reduction} 
If $\theta\in G^\flat$ has order $\nu>2$ and $\chi_h$ is the character of the order $h$ symmetric power of the natural representation of $G^\flat$, then \[\chi_{h+\nu}(\theta)=\chi_h(\theta)\] for all $h\in\mb{Z}$.
\end{Lemma}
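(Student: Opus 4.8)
The plan is to evaluate everything on the fixed element $g$ and reduce the statement to a one-line computation with roots of unity. Since the natural representation $\sigma\colon G^\flat\to\SL_2(\CC)$ is a monomorphism, an element $g$ of order $\nu$ is carried to a matrix $\sigma(g)$ of order exactly $\nu$. Being of finite order, $\sigma(g)$ is diagonalisable, and as $\det\sigma(g)=1$ its eigenvalues form a reciprocal pair $\zeta,\zeta^{-1}$. The order of $\sigma(g)$ equals the multiplicative order of $\zeta$, so $\zeta$ is a primitive $\nu$-th root of unity; the hypothesis $\nu>2$ forces $\zeta\neq\pm1$, hence $\zeta^2\neq1$ and $\zeta-\zeta^{-1}\neq0$. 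This last nonvanishing is the point on which the whole argument turns.

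Next I would read off the eigenvalues of the symmetric power. On the eigenbasis of $\sigma(g)$ the module $S^h\natrep$ has eigenvalues $\zeta^{h},\zeta^{h-2},\dots,\zeta^{-h}$, so that
\[\chi_h(g)=\sum_{j=0}^{h}\zeta^{h-2j}=\frac{\zeta^{h+1}-\zeta^{-(h+1)}}{\zeta-\zeta^{-1}},\]
the geometric-sum identity being legitimate precisely because $\zeta^2\neq1$. With this closed form the conclusion is immediate: replacing $h$ by $h+\nu$ multiplies both $\zeta^{h+1}$ and $\zeta^{-(h+1)}$ by $\zeta^{\pm\nu}=1$, whence $\chi_{h+\nu}(g)=\chi_h(g)$. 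The identity then extends to every $h\in\ZZ$ using the boundary values $\chi_{-1}=0$, $\chi_0=\triv$ fixed after (\ref{eq:CGmatrix}).

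As an alternative that stays closer to the machinery of this section, I could argue directly from the Clebsch--Gordan recursion (\ref{eq:CGmatrix}). Evaluating the transfer matrix at $g$ gives $M(g)=\left(\begin{smallmatrix}\zeta+\zeta^{-1}&-1\\1&0\end{smallmatrix}\right)$, since $\chi(g)=\tr\sigma(g)=\zeta+\zeta^{-1}$. Its characteristic polynomial is $t^2-(\zeta+\zeta^{-1})t+1=(t-\zeta)(t-\zeta^{-1})$, so $M(g)$ has the two distinct primitive $\nu$-th roots of unity $\zeta,\zeta^{-1}$ as eigenvalues and is therefore diagonalisable with $M(g)^\nu=I$. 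Iterating the recursion $\nu$ times yields $\bigl(\chi_{h+\nu}(g),\chi_{h+\nu-1}(g)\bigr)=M(g)^\nu\bigl(\chi_h(g),\chi_{h-1}(g)\bigr)=\bigl(\chi_h(g),\chi_{h-1}(g)\bigr)$, which gives the claim.

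The only genuine subtlety, and the step I expect to be the real content, is the role of the hypothesis $\nu>2$: it is exactly what guarantees $\zeta\neq\zeta^{-1}$, so that the denominator $\zeta-\zeta^{-1}$ (equivalently the discriminant of $M(g)$) does not vanish and both the diagonalisation and the closed form are valid. For $\nu\le2$ one has $\sigma(g)=\pm I$, the eigenvalues coincide, and $\chi_h(g)=\pm(h+1)$ is not periodic in $h$, which is precisely why the order-two elements must be excluded.
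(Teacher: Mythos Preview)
Your proposal is correct. Your alternative argument via the transfer matrix $M(g)$ is essentially the paper's proof: the paper computes the eigenvalues of $M(g)$ to be $\omega^{\pm1}$, notes that $\nu>2$ forces these to be distinct so that $M(g)$ is diagonalisable of order $\nu$, and concludes. Your first approach, by contrast, bypasses the recursion entirely and reads off the closed form $\chi_h(g)=(\zeta^{h+1}-\zeta^{-(h+1)})/(\zeta-\zeta^{-1})$ from the eigenvalue structure of $S^h\natrep$, which makes the periodicity visible in one stroke; this is slightly more direct and self-contained, though it uses the same key observation (that $\nu>2$ gives $\zeta\ne\zeta^{-1}$). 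One small remark: your closed form already satisfies the recursion and the boundary values $\chi_{-1}=0$, $\chi_0=1$, so it is valid for all $h\in\ZZ$ without any separate extension step.
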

\begin{proof}
Let $\zeta$ and $\zeta^{-1}$ be the eigenvalues of $\theta$'s representative in $\SL_2(\mb{C})$. In particular $\zeta^\nu=1$. The matrix which defines the linear recurrence relation (\ref{eq:CGmatrix}) becomes \[M(\theta)=\begin{pmatrix}\zeta+\zeta^{-1}&-1\\1&0\end{pmatrix}.\] We check that this matrix has eigenvalues $\zeta^{\pm 1}$ as well.
If the eigenvalues are distinct, i.e.~$\zeta^2  \ne 1$, i.e.~$\nu>2$, then $M(\theta)$ is similar to $\diag(\zeta,\zeta^{-1})$ and has order $\nu$, proving the lemma (notice that $M(\theta)$ is not diagonalisable if $\nu=1$ or $\nu=2$). 
\end{proof}

The only maps in  $\SL_2(\mb{C})$ whose square is the identity are $\pm\Id$.
The epimorphism $\pi$ in the short exact sequence
$1\rightarrow \cg{2}\rightarrow G^\flat \xrightarrow{\pi} G\rightarrow 1$ therefore satisfies
\begin{equation} 
\label{lem:order 2 in ker pi}
\theta\in \ker\pi \Leftrightarrow \theta^2=1.
\end{equation}
Moreover, if $1\ne z\in\ker\pi$ then its $\SL_2(\mb{C})$-representative is $-\Id$ and its action on a form $F\in \sym^hU\cong \sym^hU^\ast$ is given by 
\begin{equation}
\label{eq:central action on forms}
zF=(-1)^h F.
\end{equation}
Let $\regchar{G}$ and $\regchar{G^\flat}$ be the characters of the regular representations of $G$ and $G^\flat$ respectively, i.e.
\[
\pi^\ast \regchar{G}(\theta)=\left\{\begin{array}{ll} |G|&\pi(\theta)=1,\\[2mm] 0&\pi(\theta)\ne 1,\end{array}\right.\qquad
\regchar{G^\flat}(\theta)=\left\{\begin{array}{ll} 2|G|&\theta=1,\\[2mm] 0&\theta\ne 1,\end{array}\right.\]
where $\pi^\ast f(\theta)=f(\pi(\theta))$.

\begin{Theorem}
\label{thm:dimV}
Let $G$ be a polyhedral group and $G^\flat$ its corresponding binary polyhedral group, with epimorphism $\pi:G^\flat\rightarrow G$.
Let $\chi_h$ be the character of the $h$-th symmetric power of the natural representation of $G^\flat$, and $\regchar{G}$ and $\triv$ the character of the regular and trivial representation respectively. If $h,r\in\mb{N}_0$, $m\in\frac{2}{|M(G)|}\mb{N}_0$ and \[h=m|G|+r\] then
\[\chi_h=\left\{\begin{array}{ll}m\pi^*\regchar{G}+\chi_r&\text{if $h$ is even,}\\[2mm]m(\regchar{G^\flat}-\pi^*\regchar{G})+\chi_r&\text{if $h$ is odd.}\end{array}\right.\]
\end{Theorem}
\begin{proof} The statement is trivial for $m=0$. We prove the case when $h$ is even by evaluating both sides of the equation at each element of $G^\flat$. 

First of all, if $\pi(\theta)=1$ then $m \pi^\ast \regchar{G}(\theta)=m|G|$ and $\chi_h(\theta)=h+1$ and $\chi_r(\theta)=r+1$ because both $h$ and $r$ are even (cf. equation (\ref{eq:central action on forms})). 

If $\pi(\theta)\ne 1$ then $m \pi^\ast \regchar{G}(\theta)=0$. 
Also, by (\ref{lem:order 2 in ker pi}), $\theta$ has order $\nu>2$. Since $\nu\,|\,\|G^\flat\|=\frac{2|G|}{|M(G)|}\;|\;m|G|$ by Lemma \ref{lem:exponent}, one can apply Lemma \ref{lem:character degree reduction} to find $\chi_{h}(\theta)=\chi_r(\theta)$.
The second equality follows in a similar manner. 
\end{proof}

This result reduces the infinite problem of describing $\chi_h$, $h\in\mb{N}$, to the finite problem of describing $\chi_r$, $0\le r<|G|$. The particular symmetric power we will be most interested in is 
\begin{equation}
\label{eq:character of power |G|}
\chi_{|G|}= \pi^\ast \regchar{G}+\triv
\end{equation}
i.e. the case $m=1$, $r=0$, which is shown to be valid if $G\in\left\{\dg{2M},\tg,\og,\yg\right\}$.

\begin{Example}
As an illustration we compute the first twelve symmetric powers of the natural representation $\bt\hookrightarrow\SL(\mb{C}^2)$ of the binary tetrahedral group, using the Clebsch-Gordan decomposition (\ref{eq:CG}), and list them in Table \ref{tab:decomposition of symmetric powers} (possibly the easiest way to do this is using the affine Dynkin diagram $E_6$ in Figure \ref{fig:dynkinE6}, which is related to $\tg$ by the \emph{McKay correspondence}, cf.~\cite{dolgachev2009mckay,springer1987poincare}). Here we denote irreducible representation of $\bt$ by $\bt_i$, $i=1,\ldots,7$. This notation allows us later to compare representations from multiple polyhedral groups. Moreover, we drop the superscript $\flat$ and call the representation nonspinorial whenever the representation factors through $\tg$. Otherwise we say it is spinorial.
The number $\zeta_3\ne1$ is a cube root of unity. The natural representation $\bt\hookrightarrow\SL_2(\mb{C})$ is recognised as $\btiiii$ in the character table, since it is the only real valued $2$-dimensional character. We have underlined the character of a natural representation in the character tables.
\begin{center}
\begin{table}[ht!]
\caption{Irreducible characters of the binary tetrahedral group $\bt$.}
\label{tab:ctbt}
\begin{center}
\begin{tabular}{cccccccc} \hline
$\theta$& $1$ &$\tal^2$ &$\tga$&$z$&$\tbe^2$&$\tbe$&$\tal$\\
$|C_G(\theta)|$&$24$&$6$&$4$&$24$&$6$&$6$&$6$\\
\hline
$\bti$ & $1$ &$1$&$1$&$1$&$1$&$1$&$1$\\
$\btii$ & $1$ &$\zeta_3$&$1$&$1$&$\zeta_3^2$&$\zeta_3$&$\zeta_3^2$\\
$\btiii$  & $1$ &$\zeta_3^2$&$1$&$1$&$\zeta_3$&$\zeta_3^2$&$\zeta_3$\\
$\underline{\btiiii}$ & $2$ & $-1$ & $0$ & $-2$ & $-1$ & $1$ & $1$\\
$\btiiiii$ & $2$ & $-\zeta_3^2$ & $0$ & $-2$ & $-\zeta_3$ & $\zeta_3^2$ & $\zeta_3$\\
$\btiiiiii$ & $2$ & $-\zeta_3$ & $0$ & $-2$ & $-\zeta_3^2$ & $\zeta_3$ & $\zeta_3^2$\\
$\btiiiiiii$ & $3$ & $0$ & $-1$ & $3$ & $0$ & $0$ & $0$\\
\hline
\end{tabular}
\end{center}
\end{table}
\end{center}
\tikzstyle{dynkinnode}=[draw, color=black, shape=circle,minimum size=5.0 pt,inner sep=0]
\begin{center}
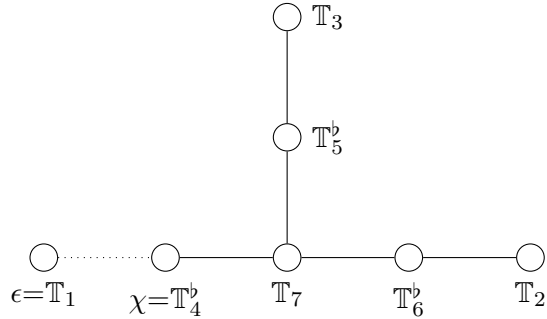
\begin{figure}[ht!]
\caption{Affine Dynkin diagram $E_6$ labelled by irreducible characters of $\bt$.}
\label{fig:dynkinE6}
\begin{center}
\begin{tikzpicture}[scale=.6]
  \path node at ( 0,0) [dynkinnode,label=270: $\triv{=}\bti$] (first) {$$}
  	node at ( 2,0) [dynkinnode,label=270: $\chi{=}\btiiii$] (second) {$$}
  	node at ( 4,0) [dynkinnode,label=270: $\btiiiiiii$] (third) {$$}
  	node at ( 6,0) [dynkinnode,label=270: $\btiiiiii$] (fourth) {}
  	node at (8,0) [dynkinnode,label=270: $\btii$] (fifth) {}
  	node at (4,1.3) [dynkinnode,label=0: $\btiiiii$] (sixth) {}
  	node at (4,2.6) [dynkinnode,label=0: $\btiii$] (seventh) {};
	\draw [dotted](first.east) to node [sloped,below]{} (second.west);
	\draw (second.east) to node [sloped,above]{} (third.west);
	\draw (third.east) to node [sloped,above]{} (fourth.west);
	\draw (fourth.east) to node [sloped,above]{} (fifth.west);
	\draw (third.north) to node [sloped,above]{} (sixth.south);
	\draw (sixth.north) to node [sloped,above]{} (seventh.south);
\label{dynkin}
\end{tikzpicture}
\end{center}
\end{figure}
\end{center}
Compare Theorem \ref{thm:dimV} for $G=\tg$ and $m=1$ (see $S^{11}\btiiii$ and $S^{12}\btiiii$ and note that $\pi^*\regchar{\tg}=\bti+\btii+\btiii+3\btiiiiiii$ and $\regchar{\bt}-\pi^*\regchar{\tg}=2\btiiii+2\btiiiii+2\btiiiiii$). 
Notice also that all even powers are nonspinorial and all odd powers are spinorial.
\begin{center}
\begin{table}[ht!]
\caption{Decomposition of symmetric powers $S^h\btiiii$, $h\le \|\bt\|=12$.}
\label{tab:decomposition of symmetric powers}
\begin{center}
\begin{tabular}{lll}\hline
$h$ & $S^h\btiiii$ & $\btiiii\otimes S^{h}\btiiii$  \\
\hline
$-1$ & $0$ & $0$\\
$0$ & $\bti$ & $\btiiii$ \\
$1$ & $\btiiii$ & $\btiiii\btiiii=\bti + \btiiiiiii$\\
$2$ & $\btiiiiiii$ & $\btiiii\btiiiiiii=\btiiii + \btiiiii+\btiiiiii$ \\
$3$ & $\btiiiii+\btiiiiii$ & $\btii+\btiii+2\btiiiiiii$ \\
$4$ & $\btii+\btiii+\btiiiiiii$ & $\btiiii+2\btiiiii+2\btiiiiii$\\
$5$ & $\btiiii+\btiiiii+\btiiiiii$ & $\bti+\btii+\btiii+3\btiiiiiii$\\
$6$ & $\bti+2\btiiiiiii$ & $3\btiiii+2\btiiiii+2\btiiiiii$\\
$7$ & $2\btiiii+\btiiiii+\btiiiiii$ & $2\bti+\btii+\btiii+4\btiiiiiii$\\
$8$ & $\bti+\btii+\btiii+2\btiiiiiii$ & $3\btiiii+3\btiiiii+3\btiiiiii$ \\
$9$ & $\btiiii+2\btiiiii+2\btiiiiii$ & $\bti+2\btii+2\btiii+5\btiiiiiii$ \\
$10$& $\btii+\btiii+3\btiiiiiii$ & $4\btiiiii+4\btiiiiii+3\btiiii$ \\
$11$& $2\btiiii+2\btiiiii+2\btiiiiii$ &  $2\bti+2\btii+2\btiii+6\btiiiiiii$\\
$12$& $2\bti+\btii+\btiii+3\btiiiiiii$ &\\\hline
\end{tabular}
\end{center}
\end{table}
\end{center}
Having computed the first twelve powers by hand, one can quickly find any other power using Theorem \ref{thm:dimV}. For instance, as $138=11\cdot 12+6$ we have $S^{138}\btiiii=11(\bti+\btii+\btiii+3\btiiiiiii)+\bti+2\btiiiiiii$.
\end{Example}

\section{The Dimension of the Module of Invariant Vectors}
In the previous section we used the Clebsch-Gordan decomposition to obtain the structure of $R^\chi$, where 
\[R=\bigoplus_{m\in\frac{2}{|M(G)|}\mb{N}_0} R_m,\quad R_m=\mb{C}[X,Y]_{m|G|},\] and $\{X,Y\}$ is a dual basis of a faithful representation $G^\flat\rightarrow\SL_2(\mb{C})$. In this section we map these findings to the isotypical components $S^\chi$ of the quotient $S\cong R/(F-1)$.

We have $\frac{2}{|M(G)|}=1$ for the most interesting and challenging groups, namely for $\tg,\;\og,\;\yg$ and $\dg{2M}$. This section is written for these polyhedral groups only, having the advantage that the factor $\frac{2}{|M(G)|}$ does not obscure the exposition. The main theorem at the end of this section can be obtained for the remaining groups ($\cg{N}$ and $\dg{2M+1}$) as well (cf.~\cite{knibbeler2014automorphic}). 

Theorem \ref{thm:dimV} shows that the character of $R_m$ is $m$ times the regular character of the polyhedral group, and thus, provides the following Poincar\'e series for the isotypical components $R^\chi$ of this $G$-module (for the case $|M(G)|=2$):
\begin{equation}
\label{eq:prehomogenised generating function}
P\left(R^\chi,t\right)=
\left\{
\begin{array}{ll}
0& \text{if $\chi$ is spinorial,}\\[2mm]
\frac{1}{(1-t)^2} & \text{if $\chi=\triv$,}\\[2mm]
\frac{\chi(1)^2 t}{(1-t)^2}  & \text{otherwise.}
\end{array}\right.
\end{equation}

It is well known that isotypical components of a polynomial ring are Cohen-Macaulay, with Krull dimension equal to the number of variables of the polynomial ring \cite{stanley1979invariants}. This means that there exists forms $\theta_1$ and $\theta_2$ and $\eta_1,\ldots,\eta_d$ such that $\mb{C}[X,Y]^\chi=\bigoplus_{r=1}^{d} \mb{C}[\theta_1,\theta_2]\eta_r$. From this fact, one can deduce that the component $R^\chi$, which is obtained from $\mb{C}[X,Y]^\chi$ by eliminating forms of degrees not in $|G|\mb{N}_0$, is Cohen-Macaulay as well. The Poincar\'e series (\ref{eq:prehomogenised generating function}) then prove the following result.
\begin{Proposition}
\label{prop:dimV generators}
Let $G^\flat$ be a binary polyhedral group, $\chi$ one of its irreducible characters and $U$ a natural $G^\flat$-module. Let $F_i$, $i\in\Omega$, be the ground forms in $\mb{C}[U]$, of respective degrees $\frac{|G|}{\nu_i}$. Then
\[R^{\chi}=
\left\{
\begin{array}{ll}
0& \text{if $\chi$ is spinorial,}\\[2mm]
\mb{C}[F_i^{\nu_i},F_{j}^{\nu_{j}}]& \text{if $\chi=\triv$ is trivial,}\\[2mm]
\bigoplus_{r=1}^{(\dim V)^2} \mb{C}[F_i^{\nu_i},F_{j}^{\nu_{j}}]P_r & \text{otherwise,}
\end{array}
\right.
\]
for some forms $P_1,\ldots,P_{(\dim V)^2}$ of degree $|G|$, and any $i, j\in\Omega$, $i\ne j$.
\end{Proposition}

Combining Proposition \ref{prop:dimV generators} with equation (\ref{eq:isotypical components as localisation}) allows us to describe the ring of automorphic functions and the dimension of the invariant vectors as a module over the ring of automorphic functions.

First some remarks on the invariant or automorphic functions $\mero_{\Gamma_1\cup\ldots\cup\Gamma_p}^G$. These are simply the holomorphic functions on the quotient surface $\cp/G$ with the points $\Gamma_1,\ldots\Gamma_p$ removed. The quotient $\cp/G$ is isomorphic to $\cp$. Hence the ring of automorphic functions is the ring of holomorphic functions on a $p$-punctured sphere.
The punctured sphere can be seen as an affine algebraic set, and the ring of automorphic functions as its coordinate ring. In particular, this ring is Noetherian. 
If $p=0,1,2$, the $p$-punctured sphere is isomorphic to the sphere, the plane and the cylinder respectively, and the rings of automorphic functions are $\mb{C}$, $\mb{C}[z]$ and $\mb{C}[z,z^{-1}]$ respectively. The case of $p=2$ is used in the concrete construction of affine Kac-Moody algebras.

We turn to the invariant vectors.
First we generalise the notation of Definition \ref{def:ground form}. 
\newcommand{\F}{\bar{F}}
As before, we write $F_\Gamma$ for the generator of the ideal $\mc{V}(\Gamma)$ in $\mb{C}[X,Y]$. If $\Gamma$ is an orbit of $G$ then the degree of $F_\Gamma$ divides $|G|$ and the power of  $F_\Gamma$ in $R_1$ will be denoted $\F_\Gamma$. 
\begin{Theorem}
\label{thm:dimV generators}
If $G<\Aut{\cp}$ is a finite group acting on a vector space $V$, then the space of invariant vectors \[\left(V\otimes\mero_{\Gamma_1\cup\ldots\cup\Gamma_p}\right)^G\] is a free $\mero_{\Gamma_1\cup\ldots\cup\Gamma_p}^G$-module whose dimension equals that of $V$.
\end{Theorem}
\begin{proof}
We consider the isotypical component $\mero_{\Gamma_1\cup\ldots\cup\Gamma_p}^\chi$ corresponding to an irreducible character $\chi$. Each product $V_{\overline{\chi}}\otimes V_\chi$ contains precisely one invariant up to scalar multiple, hence the dimension of the module $\mero_{\Gamma_1\cup\ldots\cup\Gamma_p}^\chi$ is $\chi(1)$ times the dimension of the module of invariant vectors $\left(V_{\overline{\chi}}\otimes\mero_{\Gamma_1\cup\ldots\cup\Gamma_p}\right)^G$.

Recall (\ref{eq:isotypical components as localisation}), i.e. $\mero_{\Gamma_1\cup\ldots\cup\Gamma_p}^\chi\cong(R^\chi[\F_1^{-1},\ldots,\F_p^{-1}])_0$. Incorporating Proposition \ref{prop:dimV generators} twice, we have 
\begin{align*}
\mero_{\Gamma_1\cup\ldots\cup\Gamma_p}^\chi&\cong((\bigoplus_{r=1}^{\chi(1)^2}\mb{C}[\F_\al,\F_\be]P_r)[\F_1^{-1},\ldots,\F_p^{-1}])_0
\\&=\bigoplus_{r=1}^{\chi(1)^2}(\mb{C}[\F_\al,\F_\be][\F_1^{-1},\ldots,\F_p^{-1}])_0\frac{P_r}{\F_1}
\\&\cong\bigoplus_{r=1}^{\chi(1)^2}\mero_{\Gamma_1\cup\ldots\cup\Gamma_p}^G \frac{P_r}{\F_1}
\end{align*}
and the statement follows at once.
\end{proof}
\begin{Corollary}
\label{cor:noetherian}
The space of invariant vectors as described in the previous theorem is a Noetherian module of the ring of automorphic functions.
\end{Corollary}
\begin{proof} By the previous theorem, the space of invariant vectors is finitely generated as a module over the ring of automorphic functions, and the latter is Noetherian.
\end{proof}
\begin{Remark}
The generators $\frac{P_r}{\F_1}$ are unique up to a transformation in $\GL(\chi(1)^2,\mero_{\Gamma_1\cup\ldots\cup\Gamma_p}^G)$. In particular, a diagonal transformation can be used to multiply $\frac{P_r}{\F_1}$ by a unit of the ring $\mero_{\Gamma_1\cup\ldots\cup\Gamma_p}^G$, for instance to get $\frac{P_r}{\F_{j_r}}$ for some $j_r\in\{1,\ldots,p\}$.
\end{Remark}
\begin{Remark}If $p=1$ then the Quillen-Suslin theorem 
(see for instance \cite{lang2002algebra}) can be used to prove that the invariant vectors form a free module over the ring of automorphic functions, since the latter is in this case the polynomial ring. However, this gives no information on its dimension.
\end{Remark}
Besides the value of the previous theorem to theoretical endeavours, for instance in the theory of Automorphic Lie Algebras as described below, there is a practical value for computational projects. Namely, in order to describe the infinite dimensional vector space $(V\otimes\mero_\Gamma)^G$ of invariant vectors, it is sufficient to find $\dim V$ independent invariant vectors in the finite dimensional vector space $V\otimes(\mb{C}[X,Y])_{|G|}$, vector valued forms of degree $|G|$.

\section{Divisor of Invariant Vectors}
\label{sec:divisor of invariant vectors}
In the previous section we showed that the space of invariant vectors in $V\otimes\mero_\Gamma$ is generated by $\dim V$ elements, but so far we have little information on these generators. In this section we obtain the zeros, with multiplicities, of their determinant. This provides a powerful invariant in the theory of Automorphic Lie Algebras.

We will make use of another, remarkably useful result.
\begin{Lemma}[\cite{neher2012irreducible}]
\label{lem:evaluating invariant vectors} For $\lambda\in\cp\setminus\Gamma$ define the evaluation map $\eval{\lambda}:V\otimes\mero_\Gamma\rightarrow V$ by $\eval{\lambda}(v\otimes f)=v\otimes f(\lambda)$. Then \[\eval{\lambda}\left(\left(V\otimes\mero_\Gamma\right)^G\right)=V^{G_\lambda}\] where $G_\lambda$ is the stabiliser subgroup $\{\theta\in G\,|\,\theta\lambda=\lambda\}$ of $G$.
\end{Lemma}
The inclusion $\eval{\lambda}\left(\left(V_{\overline{\chi}}\otimes\mero_\Gamma\right)^G\right)\subset V^{G_\lambda}$ is obvious, but the reverse inclusion is not. In \cite{knibbeler2014invariants} the author mistakenly presented Lemma \ref{lem:evaluating invariant vectors} as a new result, being unaware of the work in \cite{neher2012irreducible}, where in fact a more general statement is proved: 
the projective line can be replaced by any affine scheme.
 
The results of the previous section allow us to assign a positive divisor $\div{\chi}$ on the quotient space $\bigslant{\cp}{G}$ to each character $\chi$ of a polyhedral group, defined by the invariant vectors as follows.
Let $\chi$ be a nontrivial irreducible character of a polyhedral group $G$. By Proposition \ref{prop:dimV generators} there exist forms $P_{r,s}$, $r,s\in\{1,\ldots,\chi(1)\}$, freely generating $R^\chi$ over $R^G$, such that $\{P_{r,1},\ldots,P_{r,\chi(1)}\}$ is a basis of a representation of $G$ with character $\chi$, for each $r$. We can consider the determinant $\det(P_{r,s})$ of those generators, which transforms under the group $G$ as the determinant of the representation afforded by $\chi$. In particular, it is a relative invariant, and its set of zeros is $G$-invariant. Hence we can describe these zeros by a divisor $\div{\chi}$ on the space of orbits $\bigslant{\cp}{G}$.
In fact, by Lemma \ref{lem:evaluating invariant vectors}, $\det(P_{r,s})$ can only vanish on an exceptional orbit. 
\newcommand{\kapp}{\tilde{\kappa}}
This implies that it is a monomial $F_\al^{\kapp(\chi)_\al}F_\be^{\kapp(\chi)_\be}F_\ga^{\kapp(\chi)_\gamma}$ in ground forms, and the divisor takes the form 
\begin{equation}
\label{eq:determinant of invariant vectors tilde}
\div{\chi}=\kapp(\chi)_\al\Gal+\kapp(\chi)_\be\Gbe+\kapp(\chi)_\ga\Gga
\end{equation}
for some nonnegative integers $\kapp(\chi)_\al$, $\kapp(\chi)_\be$ and $\kapp(\chi)_\ga$.
The divisors of other characters of $G$ are defined by
\begin{align*}
&\div{\triv}=0,\\
&\div{\chi+\chi'}=\div{\chi}+\div{\chi'}.
\end{align*}
We will call $\div{\chi}$ the \emph{divisor of invariant vectors}.
Notice that, for an arbitrary character $\chi$, the degree of $\det(R_{r,s})$ equals $(\chi(1)-(\chi,\triv))|G|$.
The degrees of the equation $\det(R_{r,s})=F_\al^{\kapp(\chi)_\al}F_\be^{\kapp(\chi)_\be}F_\ga^{\kapp(\chi)_\gamma}$, using that $F_i$ has degree $\frac{|G|}{\nu_i}$, give us
\begin{equation}
\label{eq:sum orders by character}
\chi(1)-(\chi,\triv)=\frac{\kapp(\chi)_\al}{\nal}+\frac{\kapp(\chi)_\be}{\nbe}+\frac{\kapp(\chi)_\ga}{\nga}.\end{equation}

A general formula for the divisor of invariant vectors can be expressed using the following half integers.
\begin{Definition}[$\kappa(\chi)$]
\label{def:m(chi)}
Let $V$ be a module of a polyhedral group $G$ affording the character $\chi$. We define the half integer $\kappa(\chi)_i$ by
\[\kappa(\chi)_i=\nicefrac{1}{2}\;\codim V^{\langle \ti\rangle}=\frac{\chi(1)}{2}-\frac{1}{2\nu_i}\sum_{j=0}^{\nu_i-1}\chi(\ti^j),\qquad i\in\Omega,\]
where $\codim V^{\langle \ti\rangle}=\dim V-\dim V^{\langle \ti\rangle}$.
\end{Definition}
Lemma \ref{lem:dimV^g} for noncyclic polyhedral groups translates to 
\begin{equation}\label{eq:summ_i}\sum_{i=\al,\be,\ga}\kappa(\chi)_i=\dim V_\chi-\dim V^G_\chi=\chi(1)-(\chi,\epsilon).\end{equation}
Combining equation (\ref{eq:sum orders by character}) with equation (\ref{eq:summ_i}) gives us
\begin{equation}
\label{eq:sum orders}
\kappa(\chi)_\al+\kappa(\chi)_\be+\kappa(\chi)_\ga=\frac{\kapp(\chi)_\al}{\nal}+\frac{\kapp(\chi)_\be}{\nbe}+\frac{\kapp(\chi)_\ga}{\nga}.\end{equation}
We will prove below that in fact $\kapp(\chi)_i=\nu_i\kappa(\chi)_i$ if $\chi$ is real-valued. To this end, we first recall some general facts about group actions on Riemann surfaces.

Let $G$ be a finite group acting holomorphically and effectively on a Riemann surface $\rs$. Then a stabiliser subgroup $G_{\lambda_0}$, $\lambda_0\in \rs$, is cyclic. Moreover, if $G_{\lambda_0}=\langle \theta\rangle$ has order $\nu\ge 2$, then there is a local coordinate $z$ centred at $\lambda_0$ in which $\theta(z)=\zeta z$, where $\zeta$ is a primitive root of unity of order $\nu$ \cite{miranda1995algebraic}. As a consequence we have the following.
\begin{Lemma}\label{lem:order of semi invariant}
Adopting the above notation, if $f\in\mero[\rs]$ and $f(\theta^{-1}\lambda)=\zeta^jf(\lambda)$ for all $\lambda\in \rs$, then the Laurent expansion of $f$ in the local coordinate $z$ linearising $\theta$ is of the form \[f(z)=\sum_{n\in j+\mb{Z}\nu}f_n z^n.\] In particular, $\ord{\lambda_0}{f}\in j+\mb{Z}\nu$.
\end{Lemma}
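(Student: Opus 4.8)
The plan is to localise everything at $\lambda_0$ and compare Laurent coefficients. First I would use that $f\in\mero[S]$ is meromorphic, so in the linearising coordinate $z$ it has, on a punctured disc about $\lambda_0$, a convergent Laurent expansion $f(z)=\sum_{n}f_n z^n$ with only finitely many nonzero negative-index terms; by definition $\ord{\lambda_0}{f}$ is the least $n$ with $f_n\ne 0$. The whole statement is then the claim that $f_n=0$ for $n\notin j+\ZZ\nu$, together with the immediate consequence for the order.

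Next I would feed the equivariance hypothesis into this expansion. Since $g$ is linearised by $z\mapsto\omega z$, the transformation law $f(g^{-1}\lambda)=\omega^j f(\lambda)$ becomes, in the coordinate $z$, an identity of convergent Laurent series on the punctured disc, which is stable under scaling by the unit-modulus $\omega$. Expanding the left-hand side term by term and invoking uniqueness of Laurent coefficients, the identity decouples into the family of scalar equations $f_n\bigl(\omega^{\,n}-\omega^{\,j}\bigr)=0$, one for each $n\in\ZZ$, where the power of $\omega$ attached to $f_n$ is exactly the one produced by substituting the linearised action into $z^n$.

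The conclusion then comes from primitivity: as $\omega$ has exact order $\nu$, the factor $\omega^{\,n}-\omega^{\,j}$ vanishes precisely when $n\equiv j\pmod\nu$, and $\omega^{\,j}$ depends only on the class of $j$ modulo $\nu$. Hence $f_n=0$ off the arithmetic progression $j+\ZZ\nu$, which is the asserted shape of the expansion; the statement on $\ord{\lambda_0}{f}$ follows at once, being the minimal surviving index.

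I do not expect a deep difficulty: the proof is essentially a one-line coefficient comparison once the coordinate is in place. The single delicate point — and the step I would check most carefully — is the bookkeeping that turns $g(z)=\omega z$ into the correct power of $\omega$ multiplying $f_n$, so that the surviving indices form the progression $j+\ZZ\nu$ rather than $-j+\ZZ\nu$; this hinges on the precise sense in which $z$ linearises $g$ (action on the point versus on the coordinate function) and on consistently using the same primitive root $\omega$ in the linearisation and in the eigenvalue $\omega^{\,j}$. The only analytic input needing justification is that the pointwise functional equation may be read off coefficientwise, which is exactly the uniqueness of the Laurent expansion of a meromorphic function near an isolated point.
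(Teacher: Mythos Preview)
Your proposal is correct and follows essentially the same approach as the paper: write out the Laurent expansion in the linearising coordinate, compare coefficients using the equivariance hypothesis to obtain $\omega^{n}f_n=\omega^{j}f_n$, and conclude by primitivity of $\omega$. Your added remarks on the uniqueness of Laurent coefficients and on the sign bookkeeping (which progression, $j+\ZZ\nu$ versus $-j+\ZZ\nu$, survives) are sensible elaborations, but the core argument is identical.
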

\begin{proof}
In terms of the Laurent expansion, the hypothesis regarding the action of $\theta$ on $f(z)=\sum_{n\in\mb{Z}}f_n z^n$ reads $\sum_{n\in\mb{Z}}f_n (\zeta z)^n=\zeta^j \sum_{n\in\mb{Z}}f_n z^n$, i.e. \[\zeta^n f_n=\zeta^j f_n,\quad n\in\mb{Z}.\] This implies $f_n=0$ if $n\notin j+\mb{Z}\nu$, since $\zeta$ is a primitive root of unity of order $\nu$.
\end{proof}
Now we can prove the main result of this section.
\begin{Theorem}
\label{thm:determinant of invariant vectors}
The divisor of invariant vectors $\div{\chi}$ associated to a real-valued character $\chi$ of a noncyclic polyhedral group is given by 
\[\div{\chi}=\nal\kappa(\chi)_\al\Gal+\nbe\kappa(\chi)_\be\Gbe+\nga\kappa(\chi)_\ga\Gga.\]
\end{Theorem}
\begin{proof}
In light of (\ref{eq:determinant of invariant vectors tilde}) it remains to be shown that $\kapp(\chi)_i=\nu_i\kappa(\chi)_i$.

Let $[X_0:Y_0]\in\Gamma_i\subset\cp$ be fixed by a group element $\theta$ of order $\nu_i$. Let $\zeta$ be a root of unity such that $\theta z=\zeta z$ for a local coordinate $z$ centered at $[X_0:Y_0]$. Then $\zeta$ is a primitive root of unity of order $\nu_i$ because $G$ acts effectively on $\cp$.

If $\tau:G\rightarrow\GL(V)$ is the representation affording a real valued character $\chi$ then the eigenvalues of $\tau(\theta)$ are powers of the root of unity $\zeta$, since it is primitive, and the nonreal eigenvalues come in conjugate pairs. We denote the multiplicity of the eigenvalue $1$ and $-1$ by $\kappa_+$ and $\kappa_{-}$ respectively and the number of conjugate pairs by $\kappa_c$. That is, $\kappa_+=\dim V^{\langle \theta\rangle}$, $\kappa_++\kappa_-=\dim V^{\langle \theta^2\rangle}$ and $\kappa_++\kappa_-+2 \kappa_c=\chi(1)$. In particular, $\nicefrac{1}{2}\,\kappa_-+\kappa_c=\kappa(\chi)_i$.

Let the generators $P_{r,s}$ of $R^\chi$ be chosen such that $P_{r,1},\ldots,P_{r,\chi(1)}$ is a basis of a representation with character $\chi$ which diagonalises the group element $\theta$. If $\theta P_{r,s}=\zeta^j P_{r,s}$, where $0\le j\le \nu-1$, then $P_{r,s}$ has a zero of order at least $j$, by Lemma \ref{lem:order of semi invariant}.
We can assume that $P_{1,s},\ldots,P_{\chi(1),s}$ all transform as $\theta P_{r,s}=\zeta^j P_{r,s}$ for a fixed $j$, so that these forms, which make up a column in the matrix $(P_{r,s})$, contribute at least $j\Gamma_i$ to the divisor of $\det(P_{r,s})$.

Summing over all eigenvalues $\zeta^{\nu/2}=-1$ (in case $\nu$ is even) gives a contribution of $\nicefrac{1}{2}\,\nu_i\kappa_-\Gamma_i$ to the divisor, and summing over all pairs of conjugate eigenvalues $\zeta^{j}$ and $\zeta^{\nu_i-j}$ gives a contribution of $(j+\nu_i-j)\kappa_{c}\Gamma_i$ to the divisor. Thus the coefficient of $\Gamma_i$ in $\div{\chi}$ is bounded below by $\nu_i(\nicefrac{1}{2}\,\kappa_-+\kappa_c)=\nu_i\kappa(\chi)_i$. Using equation (\ref{eq:sum orders}) we see that this lower bound is sharp.
\end{proof}
In Appendix \ref{sec:examples} we compute various explicit examples illustrating this theorem.

\section{Automorphic Lie Algebras by Cocycles}
\label{sec:alias by cocycles}
Let $G$ be a finite group of automorphisms of $\cp$ and assume $G$ also acts on a complex Lie algebra $\mf{g}$ by Lie algebra automorphisms. Then, for any $G$-stable set $\Gamma\subset\cp$, $G$ acts on the Lie algebra $\mf{g}\otimes\mero_\Gamma$ by Lie algebra automorphisms (the Lie bracket of $\mf{g}\otimes\mero_\Gamma$ is taken to be the $\mero$-linear extension of the bracket of $\mf{g}$). The Lie subalgebra $\alia$ consisting of point wise fixed elements is called an Automorphic Lie Algebra \cite{Lombardo,lombardo2005reduction}
\[\alia=\left(\mf{g}\otimes\mero_\Gamma\right)^G\]
and the study of their Lie algebraic structure forms an active area of research. Automorphic Lie Algebras are also known by the name Equivariant Map Algebras. Indeed, $\alia$ consists of all $G$-equivariant meromorphic maps $\cp\rightarrow\mf{g}$ with poles confined to $\Gamma$.
In this section we describe the implications of our investigation of invariant vectors for the  theory of Automorphic Lie Algebras.
We simplify notation by introducing the letter
\[T=\mero_\Gamma^G\cong S^G\] for the ring of automorphic functions. 

As a first result, Theorem \ref{thm:dimV generators} describes the structure of an Automorphic Lie Algebra as a module over the automorphic functions. Explicitly, if $\dim_{\mb{C}}\mf{g}=k$, then there exists elements $m_1,\ldots,m_k$ such that 
\begin{equation}
\label{eq:alia dimension k}
\alia=Tm_1\oplus\ldots\oplus T m_k.
\end{equation}
This was also shown in \cite{chopp2011lie} using different techniques, and it appears that it was known by authors of earlier works on Automorphic Lie Algebras, but we were not able to find a proof in this literature.

It is common to say $\alia$ has rank $k$ in case of (\ref{eq:alia dimension k}), but we will say it has dimension $k$, to avoid confusion with the rank of the Lie algebra. We recall a fundamental fact.
\begin{Lemma}[{\cite[Corolary 4.4 (b)]{eisenbud1995commutative}}]
\label{lem:generators form free basis}
Let $T$ be a ring and $M$ a free $T$-module of dimension $k$. Then any set of $k$ elements of $M$ that generate $M$ forms a free basis; in particular, the dimension $k$ of $M$ is well defined.
\end{Lemma}

The evaluation map provides a useful tool in the study of Automorphic Lie Algebras. Lemma \ref{lem:evaluating invariant vectors} provides a surjective morphism
\begin{equation}\label{eq:evaluation epimorphism}
\eval{{\lambda_0}}:\alia\twoheadrightarrow \mf{g}^{G_{\lambda_0}},\quad \lambda_0\in\cp
\end{equation}
for all $\lambda_0\in\cp\setminus\Gamma$. In other words, $\alia$ is an extension of $\mf{g}^{G_{\lambda_0}}$ by the ideal $\ker \eval{{\lambda_0}}$, for each $\lambda_0$.
The Lie algebra $\mf{g}^{G_{\lambda_0}}$ is easily computed and belongs to a well-studied class of Lie algebras. Indeed, any stabiliser subgroup $G_{\lambda_0}$ of a finite group acting on a Riemann surface is a cyclic group. Thus we see that $\mf{g}^{G_{\lambda_0}}$ is nothing but $\mf{g}^{\theta}$ where $\theta$ is a torsion of $\mf{g}$. 

In this section we study Automorphic Lie Algebras whose base Lie algebra $\mf{g}$ is simple.
Inner torsions of simple Lie algebras are classified by Cartan \cite{cartan1927geometrie} and general torsions are classified by Kac \cite{kac1969automorphisms}. The fixed point algebras $\mf{g}^{\theta}$ are known as pseudo-Levi subalgebras of $\mf{g}$ and connected Lie groups with Lie algebra of the form $\mf{g}^{\theta}$ are called pseudo-Levi subgroups \cite{sommers1998generalization}. 
These groups act on any $\theta$-eigenspace in $\mf{g}$ (by restricting the adjoint representation). Quotients of pseudo-Levi subgroups by the kernels of these actions are also known as ${\theta}$-groups \cite{vinberg1976the}. They appear in the study of nilpotent orbits. The special case where $\theta$ has order $2$ is used to classify real forms and symmetric spaces \cite{helgason1978differential}. 

The Chevalley normal form for Automorphic Lie Algebras, introduced in \cite{lombardo2010on} and generalised in \cite{knibbeler2017higher}, is a set of generators for the Lie algebra as a $T$-module with Lie brackets partially described by the Cartan matrix of $\mf{g}$. More specifically, the generators define a $T$-module direct sum $\mf{n}_-\oplus\mf{h}\oplus\mf{n}_+$ with $\mf{h}$ an abelian algebra acting diagonally on these generators with eigenvalues described by the Cartan matrix of $\mf{g}$. However, different from the classical situation, the brackets of elements in $\mf{n}_\pm$ cannot be defined by the Cartan matrix alone. The remaining necessary information can be conveniently captured by a $2$-cocycles on the root system of $\mf{g}$ (cf. \cite{knibbeler2019cohomology}). The study of these $2$-cocycles is significantly easier than the study of Automorphic Lie Algebras directly. Below we explain how the results of this paper put constraints on the cocycles which can describe Automorphic Lie Algebras.

Existence of the normal from described above relies on the existence of $\mf{h}$ which we will call the Cartan subalgebra, defined below.
We call an endomorphism $E$ of a finitely generated $T$-module $M$ diagonalisable if there exist $z_1,\ldots,z_k\in M$ such that $M=Tz_1+\ldots+Tz_k$ and $Ez_i=a_i z_i$ for some $a_i\in T$. In case $M$ is a free $T$-module of dimension $k$, this implies that $M=Tz_1\oplus\ldots\oplus Tz_k$ by Lemma \ref{lem:generators form free basis}.

\begin{Definition}
A subalgebra $\mf{h}$ of an Automorphic Lie Algebra $\mf{A}$ is called a Cartan subalgebra (CSA) if each element $h\in\mf{h}$ has the property that $\ad(h)$ is a diagonalisable $T$-module endomorphism of $\mf{A}$, and $\mf{h}$ equals its own centraliser in $\mf{A}$.

A CSA $\mf{h}$ is called a constant spectrum CSA if there exists a set of generators for $\mf{h}$ as a $T$-module such that for each generator $h$ the eigenvalues of the $T$-module endomorphism $\ad(h)$ are in $\mb{C}$.
\end{Definition}

\begin{Remark} The fact that we work over a ring rather than a field complicates the definition of the CSA.
In a classical treatment of complex finite dimensional semisimple Lie algebras (e.g.~\cite{humphreys1972introduction}) one may define a CSA as a maximal subalgebra of semisimple elements.
The diagonalisability of CSA elements then requires an algebraically closed field. We set this problem aside by assuming diagonalisability directly.

If one would then opt to define a CSA as a maximal subalgebra of diagonalisable elements, then it is not clear whether a CSA equals its own centraliser, nor whether it is a summand of $\mf{A}$ as a $T$-module. The classical proof relies on the Jordan-Chevalley decomposition for endomorphisms of a vector space. This decomposition is only available for vector spaces over a perfect field. We set this problem aside as well, by assuming a CSA to equal its own centraliser.

So, why do we insist on this diagonalisability and require the CSA to be a summand, apart from the obvious benefits for describing the structure of the Lie algebra? 
This is simply because the computational classification results to date (cf.~\cite{knibbeler2017higher}) show that the class of Automorphic Lie Algebras containing such a nice CSA contains many interesting and relevant examples. Precisely which Automorphic Lie Algebras belong to this class remains an open problem.
\end{Remark}
\begin{Definition}
An Automorphic Lie Algebra is called hereditary if it contains a constant spectrum CSA.
\end{Definition}
The following Lemma allows us to speak of the dimension of a CSA and the rank of a hereditary Automorphic Lie Algebra.
\begin{Lemma}
\label{lem:csa free}
A CSA $\mf{h}$ of an Automorphic Lie Algebra $\alia$
defines a direct sum decomposition of $\alia$ into $\mf{h}$ and a finite number of root spaces.
Each summand is a free module over the ring of automorphic functions if $\mf{h}$ is a constant spectrum CSA or if poles are confined to a single orbit.
\end{Lemma}
\begin{proof}
Let $\mf{h}$ be a CSA of an Automorphic Lie Algebra $\alia$. Its adjoint image is a set of commuting diagonalisable endomorphisms of $\alia$, hence they are  simultaneously diagonalisable. This establishes $\alia$ as a direct sum of root spaces
\[\alia=\bigoplus_{{\alpha}}\alia_{\alpha}\] where ${\alpha}$ ranges over a subset of $T$-linear maps $\mf{h}\rightarrow T$ (called roots) and $\alia_{{\alpha}}=\{x\in\alia\,|\,[h,x]=\alpha(h)x, \forall h\in\mf{h}\}$. Since $\alia$ is Noetherian (Corollary \ref{cor:noetherian}), the number of roots is finite (the ascending chain $\alia_\alpha\subset\alia_\alpha\cup\alia_\beta\subset\ldots$ eventually stabilises).
The centraliser $\alia_0=C_{\alia}(\mf{h})=\{x\in\alia\,|\,[x,h]=0, \forall h\in\mf{h}\}$ equals $\mf{h}$ by definition, hence $\mf{h}$ is a summand of $\mf{A}$. Since $\mf{A}$ is free, we moreover notice that all root spaces are projective modules.

If the $G$-invariant set of poles $\Gamma$ defining $\alia$ is a single $G$-orbit, then freeness of the root spaces $\alia_{\alpha}$ follows from the Quillen-Suslin theorem, because $T$ is in this case a polynomial ring. In a more general setting, $T$ is no longer polynomial. However, in the constant spectrum case one can still use the Quillen-Suslin theorem as follows.

The CSA $\mf{h}$ is finitely generated over $T$ because $\alia$ is Noetherian. Let $h_1,\ldots,h_N$ be such generators, and let $\alpha(h_i)\in\mb{C}$. There exists $h\in\sum\mb{C}h_i$ such that all values $\alpha(h)$ are distinct. Indeed, the subset of $\sum\mb{C}h_i$ for which this is not the case is a finite collection of hyperplanes $\ker ((\alpha-\beta):\sum\mb{C}h_i\rightarrow\mb{C})$, which cannot cover $\sum\mb{C}h_i$.
Fix one $G$-orbit $\Gamma_1$ in $\Gamma$ and denote by $\alia_1$ the Automorphic Lie Algebra with poles confined to $\Gamma_1$. Notice that $\alia_1\subset\alia$.  Then we claim that 
\[\alia_1=\bigoplus_{{\alpha}}(\alia_1\cap\alia_{{\alpha}}).\]
That the right hand side is contained in the left hand side is clear.  Thanks to the existence of $h$, the reverse inclusion follows from a standard argument (cf.~\cite[Proposition 1.5]{kac1990infinite}). We include it for completeness. Let $\sum x_\alpha \in \alia_1$ with $x_\alpha\in \alia_\alpha$. We need to show that $x_\alpha\in \alia_1$. Because $\ad(h)$ has constant spectrum, it maps $\alia_1$ into itself. Hence $\ad(h)^k\sum x_\alpha=\sum\alpha(h)^k x_\alpha$, for $k=0,1,2\ldots$, are all elements of $\alia_1$. But $(\alpha(h)^k)$ is an invertible Vandermonde matrix (when $\alpha$ ranges over all roots and $k$ from 0 to one less than the number of roots). The inverse of this matrix takes the elements $\sum\alpha(h)^k x_\alpha$ to $x_\alpha$, proving that $x_\alpha\in\alia_1$ as well.

We have established $\alia_1\cap\alia_{{\alpha}}$ as a summand of $\alia_1$ hence a projective module over a polynomial ring and therefore free by the Quillen-Suslin theorem. Moreover, the dimension of $\alia_1\cap\alia_{{\alpha}}$ is finite since $\alia_1$ is Noetherian. Say
\begin{equation}
\label{eq:h1 free}
\alia_1\cap\alia_{{\alpha}}=T_1x_1\oplus\ldots\oplus T_1x_N
\end{equation} where $T_1$ is the ring of automorphic functions whose poles are confined to $\Gamma_1$. By (\ref{eq:more poles using invariants}) we have $\alia=T\alia_1$ and in particular $\alia_{{\alpha}}=T(\alia_1\cap\alia_{{\alpha}})=Tx_1+\ldots+ Tx_N$. Suppose $f_1x_1+\ldots+f_Nx_N=0$ for some $f_i\in T$. There exists $f_0\in T_1$ with zeros confined to $\Gamma\setminus\Gamma_1$, such that the order of those zeros is at least the maximum of the orders of the poles of $f_1,\ldots,f_N$, at each orbit. That is, $f_0f_i\in T_1$. Then  $f_0f_1x_1+\ldots+f_0f_Nx_N=0$ implies $f_0f_i=0$ for $i=1,\ldots,N$, by (\ref{eq:h1 free}). But $f_0$ is a unit in $T$ hence $f_i=0$ for $i=1,\ldots,N$, and therefore $\alia_{{\alpha}}=Tx_1\oplus\ldots\oplus Tx_N$.
\end{proof}

The main result of this section is that the Lie structure of a hereditary Automorphic Lie Algebra is described by  a $2$-cocycle on the classical root system in the sense of groupoid cohomology. Let us make precise what we mean by that.
\begin{Definition}
Let $(\Lambda,\ast)$ be a groupoid. That is, the set $\Lambda$ is equipped with a partial function $\ast:\Lambda\times\Lambda\rightarrow \Lambda$ which satisfies the axioms of a group. The set of $2$-cocycles $Z^2(\Lambda,M)$ consists of all partial maps $\omega^2$ with the same domain as $\ast$ and as codomain an abelian group $M$, which satisfy the cocycle condition
\begin{equation}
\label{eq:cocycle condition}
\omega^2(\beta,\gamma)-
\omega^2(\alpha\ast\beta,\gamma)+
\omega^2(\alpha,\beta\ast\gamma)-
\omega^2(\alpha,\beta)=0.
\end{equation}
The subset of symmetric cocycles, those satisfying $\omega^2(\alpha,\beta)=\omega^2(\beta,\alpha)$, is denoted $Z^2_+(\Lambda,M)$.
\end{Definition}
The set of weights of a representation of a simple Lie algebra is an example of an abelian groupoid. This will help us describe Automorphic Lie Algebras.
\begin{Theorem}
\label{thm:alia normal form}
Let $\alia$ be a hereditary Automorphic Lie Algebra based on a simple Lie algebra $\mf{g}$ of rank $N$ with root system $\roots$.
Then there is a direct sum of $T$-modules
\begin{equation}
\label{eq:alias cartan weyl basis}
\alia={\mf{h}}\oplus\bigoplus_{{\alpha}\in\roots}\alia_{\alpha},\quad {\mf{h}}=T{h}_1\oplus\ldots\oplus T{h}_{N},\quad \alia_\alpha=T{x}_\alpha,
\end{equation}
with the following properties. Each evaluation of $\mf{h}$ is a CSA of $\mf{g}$. If $\roots$ is the root system of $\mf{g}$ relative to a generic evaluation $\eval{\lambda_0}\mf{h}$, then $[h_i,x_\alpha]=\tilde{\alpha}(h_i) x_\alpha$ where $\tilde{\alpha}=\alpha\circ\eval{\lambda_0}$.

The generators ${x}_\alpha$ can be chosen such that
\begin{equation}
\label{eq:bracket of root vectors}
[{x}_{{\alpha}},{x}_{{\beta}}]=\epsilon(\alpha,\beta)\mb{I}^{\omega^2(\alpha,\beta)} {x}_{{\alpha}+{\beta}}.
\end{equation}
Here $\epsilon$ is the well known antisymmetric $\mb{Z}$-valued map describing a Chevalley basis of $\mf{g}$, as described by Tits \cite{tits1966constantes}, $\omega^2=(\omega^2_\al,\omega^2_\be,\omega^2_\ga)$ is an element of $Z^2_+(\roots\cup\{0\},\mb{Z}^3)$ taking values in $\mb{Z}_{\ge 0}^3$,
the multi index notation $\mb{I}^{\omega^2(\alpha,\beta)}=\prod_{i\in\Omega}\ii^{\omega^2_i(\alpha,\beta)}$ is used, and finally, the automorphic functions $\ii$ can be chosen to be $\mb{I}_{i,j_i}=\F_i \F_{j_i}^{-1}$ for any orbit of poles $\Gamma_{j_i}$ in $\Gamma$. In particular, if an exceptional orbit $\Gamma_i$ appears in $\Gamma$, then one can choose $\ii=\mb{I}_{i,i}=1$. Consequently, if all exceptional orbits appear in $\Gamma$, then $\alia$ is isomorphic to the current algebra $\mf{g}\otimes T$.
\end{Theorem}
\begin{proof}
By Lemma \ref{lem:csa free} and the assumption that $\alia$ contains a constant spectrum CSA $\mf{h}$, there exists a direct sum $\alia=\bigoplus_\alpha \alia_\alpha$ diagonalising the adjoint action of $\mf{h}=\alia_0$, where each summand is a free $T$-module, say $\alia_\alpha=\bigoplus_{j=1}^{N_\alpha}Tx_{\alpha,j}$. 
Theorem \ref{thm:dimV generators} implies that $\sum_\alpha N_\alpha =\dim_{\mb{C}}\mf{g}$. Let $\lambda_0\in\cp$ have trivial isotropy. Lemma \ref{lem:evaluating invariant vectors}, together with the previous, implies that $N_\alpha=\dim_\mb{C}\eval{\lambda_0}\alia_\alpha$ and $\eval{\lambda_0}x_{\alpha,1},\ldots,\eval{\lambda_0}x_{\alpha,N_\alpha}$ are $\mb{C}$-linearly independent.

We first show that the evaluation $\eval{\lambda_0}\mf{h}$ is a CSA of $\mf{g}$ (and in particular $N_0=N$). Notice that $\eval{\lambda_0}\mf{h}$ is an abelian subalgebra of $\mf{g}$ consisting of diagonalisable elements. Therefore $C_{\mf{g}}(\eval{\lambda_0}\mf{h})$ contains a CSA of $\mf{g}$ containing $\eval{\lambda_0}\mf{h}$. We now show that $C_{\mf{g}}(\eval{\lambda_0}\mf{h})=\eval{\lambda_0}\mf{h}$.  Let $H$ be an element of $C_{\mf{g}}(\eval{\lambda_0}\mf{h})$. By the evaluation lemma we have $H=\sum_{\alpha,j} c_{\alpha,j}\eval{\lambda_0}x_{\alpha,j}$ for some complex coefficients $c_{\alpha,j}$. Define $h=\sum_{\alpha,j} c_{\alpha,j}x_{\alpha,j}\in\alia$. Then $[x_{0,i},h]=\sum_{\alpha,j} c_{\alpha,j}\alpha(x_{0,i})x_{\alpha,j}$ vanishes at $\lambda_0$, for all $i$. But because $\eval{\lambda_0}x_{\alpha,j}$ are $\mb{C}$-linearly independent, this implies that the constants $c_{\alpha,j}\alpha(x_{0,i})$ are zero for all $\alpha,j,i$. Therefore, for each nonzero coefficient $c_{\alpha',j'}\ne 0$ we have $\alpha'(x_{0,i})=0$ for all $i$, hence $\alpha'=0$. This implies that $H\in\eval{\lambda_0}\mf{h}$, so that $\eval{\lambda_0}\mf{h}$ equals its centraliser and is a CSA of $\mf{g}$.
To see that the exceptional evaluations of $\mf{h}$ are CSAs of $\mf{g}$ as well, note that the constant spectrum does not allow the dimension of $\eval{\lambda}\mf{h}$ to depend on $\lambda$.

The generically evaluated Automorphic Lie Algebra $\eval{\lambda_0}\alia$ is nothing but the base Lie algebra $\mf{g}$ (by Lemma \ref{lem:evaluating invariant vectors} or (\ref{eq:evaluation epimorphism})), and $\eval{\lambda_0}\alia=\eval{\lambda_0}{\mf{h}}\oplus\bigoplus_{{\alpha}\ne 0}\eval{\lambda_0}\alia_{{\alpha}}$ is a Cartan Weyl decomposition of $\mf{g}$ because $\eval{\lambda_0}\alia$ is a CSA of $\mf{g}$. The roots of $\alia$ are therefore in bijective correspondence to the roots $\roots$ of $\mf{g}$ relative to the CSA $\eval{\lambda_0}{\mf{h}}$. Moreover, since
$\sum_\alpha N_\alpha =\dim_{\mb{C}}\mf{g}=N+|\roots|$ we have $N_\alpha=1$ for $\alpha\ne 0$. Thus we obtain (\ref{eq:alias cartan weyl basis}). For a more explicit description of the roots, evaluate $[h_i,x_\alpha]=\tilde{\alpha}(h_i)x_{\alpha}$ at $\lambda_0$. Because $\eval{\lambda_0}x_\alpha\ne 0$ we find that the constant eigenvalue $\tilde{\alpha}(h_i)$ corresponds to $\alpha(\eval{\lambda_0}h_i)$.

The brackets $[\eval{\lambda_0}{x}_{{\alpha}},\eval{\lambda_0}{x}_{{\beta}}]=\epsilon(\alpha,\beta)\eval{\lambda_0}{x}_{{\alpha}+{\beta}}$ are the classical brackets of the simple Lie algebra $\mf{g}$ as described by Chevalley and Tits. Hence the brackets in the Automorphic Lie algebra have the form $[{x}_{{\alpha}},{x}_{{\beta}}]=\epsilon(\alpha,\beta)f(\alpha,\beta) {x}_{{\alpha}+{\beta}}$
where $f(\alpha,\beta)\in T$. For any value of $\lambda$ outside the exceptional orbits $\Gamma_i$ the function $f(\alpha,\beta)$ is nonzero, as $\alia$ evaluates to $\mf{g}$. Hence, if $\Gamma$ is decomposed into orbits as $\bigcup_{j\in\zn{p}}\Gamma_j$, then $f(\alpha,\beta)$ must be a monomial in $\mb{I}_{i,j}$, $i\in\{\al,\be,\ga\}$, $j\in\zn{p}$. We can scale  $\mb{I}_{i,j}$ with respect to the particular generic element $\lambda_0$ such that  $\mb{I}_{i,j}(\lambda_0)=1$. Then $f(\alpha,\beta)$ is a monomial in  $\mb{I}_{i,j}$ with coefficient $1$, 
and thus $f$ is in fact described by a $2$-form $\tilde{\omega}^2$ with values in $\mb{Z}^{3p}$ by $f(\alpha,\beta)=\prod_{i,j}\mb{I}_{i,j}^{\tilde{\omega}^2_{i,j}(\alpha,\beta)}$. The $2$-form satisfies the cocycle condition (\ref{eq:cocycle condition}) and is symmetric because the Lie bracket of $\alia$ satisfies the Jacobi identity and is antisymmetric (cf.~\cite{knibbeler2019cohomology}).  

It remains to be shown that, for each $i\in\{\al,\be,\ga\}$, we can eliminate all but one of the $p$ $2$-cocycles $\tilde{\omega}^2_{i,j}$, $j\in\zn{p}$. This boils down to a cohomology question which was answered in \cite{knibbeler2019cohomology}: $H^2_+(\roots\cup\{0\},\mb{Z})=0$. Indeed, the automorphic functions $\mb{I}_{j,j'}$, for $j,j'\in\zn{p}$, are units of $T$, so that diagonal transformations of the form 
\begin{equation}
\label{eq:diagonal transformation}
h_i\mapsto h_i,\quad e_\alpha\mapsto \mb{I}_{j,j'}^{\omega^1(\alpha)}e_\alpha
\end{equation}
are invertible, for any $\omega^1:\roots\rightarrow\mb{Z}$. The fact that $H^2_+(\roots\cup\{0\},\mb{Z})=0$ and $\tilde{\omega}^2_{i,j}$ is a $2$-cocycle implies that there exists a map $\omega^1$ such that $\omega^1(\beta)-\omega^1(\alpha+\beta)+\omega^1(\alpha)=\tilde{\omega}^2_{i,j}(\alpha,\beta)$. With this choice of $\omega^1$, the transformation (\ref{eq:diagonal transformation}) sends $\tilde{\omega}^2$ to a new cocycles $\tilde{\tilde{\omega}}^2$ with $\tilde{\tilde{\omega}}^2_{i,j}=0$ and $\tilde{\tilde{\omega}}^2_{i,j'}={\omega}^2_{i,j}+{\omega}^2_{i,j'}$. Thus, for each exceptional orbit of zeros, $i\in\{\al,\be,\ga\}$, one can choose an orbit of poles, $j_i\in\zn{p}$, and by repeating the described  process, obtain a basis of the Automorphic Lie Algebra which produces structure constants described by the $2$-cocycles ${\omega}^2_{i,j}$ with ${\omega}^2_{i,j}=0$ for $j\ne j_i$ and ${\omega}^2_{i,j_i}=\sum_{j\in\zn{p}}\tilde{\omega}^2_{i,j}$. Now put $\omega^2_i=\omega^2_{i,j_i}$ to obtain (\ref{eq:bracket of root vectors}).
\end{proof}
We have remarked that most of the computed Automorphic Lie Algebras are hereditary, but it is hard to determine precisely which Automorphic Lie Algebras are hereditary. The previous theorem does provide a class of Automorphic Lie Algebras which are not hereditary.
\begin{Corollary}
If an Automorphic Lie Algebra, defined by an embedding $\rho:G\hookrightarrow\Aut{\mf{g}}$, is hereditary, then $\rho(G)\subset\Int{\mf{g}}$. 
\end{Corollary}
\begin{proof}
If $\mf{g}$ is simple, $\theta\in\Aut{\mf{g}}$ and $\mf{g}^\theta$ contains a CSA for $\mf{g}$, then $\theta\in\Int{\mf{g}}$ \cite[Chapter IX, Proposition 3]{jacobson1979lie}. By Theorem \ref{thm:alia normal form}, each value $\eval{\lambda_0}\alia$ of a hereditary Automorphic Lie Algebra contains a CSA for $\mf{g}$. Moreover, $\eval{\lambda_0}\alia=\mf{g}^{\theta_0}$ (using (\ref{eq:evaluation epimorphism}) and $\rho(G_{\lambda_0})=\langle \theta_0\rangle$) hence $\theta_0 \in\Int{\mf{g}}$. Finally, recall that $\rho(G)$ is the union of the stabiliser subgroups $\langle \theta_0\rangle$.
\end{proof}

\begin{Lemma}
\label{lem:kernels}
Let $\alia=\mf{h}\oplus\bigoplus_{\alpha\in\roots} T x_\alpha$ be the a Cartan Weyl decomposition (\ref{eq:alias cartan weyl basis}) of a hereditary Automorphic Lie Algebra.
For $i\in\Omega$ we pick $\lambda_i\in\Gamma_i\subset\cp$ and define
\begin{equation}
\label{eq:roots i}
\roots_{i}=\{\alpha\in\roots\;|\;\eval{\lambda_i}{x}_{{\alpha}}\ne0\}.
\end{equation}
This set is well defined. It is the root system of the reductive Lie algebra $\mf{g}^{G_{\lambda_i}}$ with respect to the CSA $\eval{\lambda_i}{\mf{h}}$, and $\mf{g}^{G_{\lambda_i}}$ is conjugate to $\mf{g}^{\ti}$ by an element of the reduction group, where $\ti$ is the generator of the polyhedral group $G$ that fixes an element of $\Gamma_i$.

In particular, if $\alpha,\beta\in\roots_i$ and $\alpha+\beta\in\roots$, then $\alpha+\beta\in\roots_i$ and
$\omega^2_i(\alpha,\beta)=0$,
where $\omega^2$ is the cocycle in (\ref{eq:bracket of root vectors}).
\end{Lemma}
\begin{proof}
For $\lambda_0\in\cp\setminus\Gamma$ we define $\roots_{\lambda_0}=\{\alpha\in\roots\;|\;\eval{\lambda_0}\alia_{{\alpha}}\ne\{0\}\}$. This set of roots depends only on the orbit $G\lambda_0\subset\cp$ by equivariance of the Automorphic Lie Algebra. Indeed, for $g\in G$ we have $\eval{(g\lambda_0)}\alia_\alpha=g\cdot(\eval{\lambda_0}\alia_\alpha)$, thus $\eval{(g\lambda_0)}\alia_\alpha=\{0\}\Leftrightarrow \eval{\lambda_0}\alia_\alpha=\{0\}$. 
Moreover, $\roots_{\lambda_0}$ is the set of nonzero eigenvalues of the Lie algebra $\mf{g}^{G_{\lambda_0}}$ with respect to the CSA  $\eval{\lambda_0}{\mf{h}}$. It is well known that $G_{\lambda_0}$ is a cyclic group $\langle\theta\rangle$, conjugate to $\langle\ti\rangle$ if $\lambda_0\in\Gamma_i$, in which case $\mf{g}^{G_{\lambda_0}}$ is conjugate to $\mf{g}^\ti$.
The last part is obtained by evaluating (\ref{eq:bracket of root vectors}) in $\lambda_0\in\Gamma_i$.
\end{proof}
We turn to the consequences of Theorem \ref{thm:determinant of invariant vectors} for the structure of Automorphic Lie Algebras. 
\begin{Theorem}
\label{thm:killing form}
Let $\omega^2$ be a $2$-cocycle defining the structure of a hereditary Automorphic Lie Algebra through (\ref{eq:bracket of root vectors}) and let $\roots_i$ be the root system (\ref{eq:roots i}). Then
\begin{equation*}
\omega^2_i(\alpha,-\alpha)=
\left\{
\begin{array}{ll}
0& \text{if $\alpha\in\roots_i$,}\\[2mm]
1& \text{if $\alpha\notin\roots_i$.}
\end{array}\right.
\end{equation*}
\end{Theorem}
\begin{proof}
Let $\{{h}_j,\,{x}_\alpha\,|\,j=1,\ldots,N,\,\alpha\in\roots\}$ be a Cartan Weyl basis realising (\ref{eq:alias cartan weyl basis}).
The determinant of the vectors ${h}_j$ and ${x}_\alpha$ is an element of $T$ given by \[\kappa(\chi_{\mf{g}})_i=|\roots^+|-|\roots_i^+|,\] up to constants, as described in Theorem \ref{thm:determinant of invariant vectors} (notice that $\chi_{\mf{g}}$ is a real character since the group preserves a bilinear form, the Killing form of $\mf{g}$). Since $\eval{\lambda_0}{\mf{h}}$ has constant dimension, the contribution of ${h}_i$ to this determinant is a factor in $\mb{C}^\ast$. The nontrivial contribution comes from the weight vectors ${x}_\alpha$. 

To be more precise on this point, we first fix a point $\lambda\in\cp\setminus\Gamma$. Then $\eval{\lambda}{\mf{h}}$ is a CSA of $\mf{g}$, and $\mf{g}$ is a vector space direct sum of the image and the kernel of $\ad(\eval{\lambda}{\mf{h}})$.
The elements $\eval{\lambda}x_\alpha={x}_\alpha(\lambda)$ are in the image. In other words, there exists a basis ${h}_1(\lambda),\ldots,{h}_N(\lambda),$ $n_1(\lambda),\ldots,n_{\dim \mf{g}-N}(\lambda)$ such that
${x}_\alpha(\lambda)=\sum_{j=1}^{\dim\mf{g}-N}f_\alpha^j(\lambda)n_j(\lambda)$. In this basis it is clear that the determinant of the vectors ${h}_j(\lambda)$ and ${x}_\alpha(\lambda)$ is given by $\det(f_\alpha^j(\lambda))$. In particular, all the $\kappa(\chi_\mf{g})_i$ factors $\ii$ in this determinant are provided by the weight vectors  ${x}_\alpha(\lambda)$.

If $\alpha\in\roots\setminus\roots_{i}$ then $[{x}_\alpha,{x}_{-\alpha}]$ must have a factor $\ii$. Indeed, the generators ${h}_i$ of the CSA do not vanish at any point of $\cp$ and $[{x}_\alpha,{x}_{-\alpha}]\in{\mf{h}}$ does because $\alpha\not\in\roots_{i}$. But since we know the total number of factors $\ii$ equals $\kappa(\chi_{\mf{g}})_i=|\roots^+|-|\roots_i^+|$ we can conclude that $[{x}_\alpha,{x}_{-\alpha}]$ has precisely one factor $\ii$ and if $\alpha\in\roots_{i}$ then $[{x}_\alpha,{x}_{-\alpha}]$ has no factor $\ii$. This proves the statement.
\end{proof}
We end this section with three consequences of Theorem \ref{thm:killing form}, for the Killing form, the  abelianisation (or first homology), and for the remaining values of the cocycle $\omega^2$ defining the Automorphic Lie Algebra.

We define the Killing $K$ form of an Automorphic Lie Algebra $\alia$ in the usual way: \[K:\alia\times\alia\rightarrow T,\quad (x,y)\mapsto \tr\ad(x)\ad(y).\] 
Here $\tr$ stands for trace and we take the trace of endomorphisms of $\alia$ as a finite dimensional free $T$-module, not as an infinite dimensional vector space over $\mb{C}$.
Theorem \ref{thm:killing form} can be seen as a description of the Killing form of Automorphic Lie Algebras.
\begin{Corollary} Let $K$ be the Killing form of a hereditary Automorphic Lie Algebra.
The determinant of $K$ has a divisor of zeros on $\bigslant{\cp}{G}$ given by
\begin{equation}
\label{eq:determinant killing form}
\div{\det(K)}=
2\left(\nal\kappa(\chi_\mf{g})_\al\Gamma_\al+\nbe\kappa(\chi_\mf{g})_\be\Gamma_\be+\nga\kappa(\chi_\mf{g})_\ga\Gamma_\ga\right)
\end{equation} 
where $\chi_\mf{g}$ is the character of the action of $G$ on $\mf{g}$.
If poles are allowed in a single orbit $\Gamma=\Gamma_1$, then this divisor is invariant under a change of generators of $\alia$ as a $T$-module. If poles are allowed in $p\ge2$ orbits $\Gamma=\bigoplus_{j\in\zn{p}}\Gamma_j$, then it is invariant up to $\bigoplus_{j\in\zn{p}}2\mb{Z}\Gamma_j$. 
\end{Corollary}
\begin{proof}
The values $K(h_i,h_j)$ correspond to the simple situation of $\mf{g}$, and $\det{K(h_i,h_j)}\in\mb{C}^*$. We have $K(x_\alpha,x_\beta)=0$ if $\alpha+\beta\ne 0$ since then $\ad(x_\alpha)\ad(x_\beta)$ is nilpotent. Therefore $K(x_\alpha,x_{-\alpha})\in\mb{C}^*\mb{I}^{\omega^2(\alpha,-\alpha)}$, using nondegeneracy of $K$ evaluated at a generic point.  The factor $2$ appears in (\ref{eq:determinant killing form}) because $K(x_\alpha,x_{-\alpha})$ and $K(x_{-\alpha},x_{\alpha})$ are both factors of the determinant).

To see that the divisor (\ref{eq:determinant killing form}) is invariant under a change of basis, note that such a change of basis is given by a matrix $M\in\GL(\dim\mf{g},T)$, and a bilinear form given by a matrix $K$ is mapped to one with matrix $M^{-t}KM^{-1}$. A matrix is invertible if and only if its determinant is a unit. Therefore, the divisor of $\det(M^{-t}KM^{-1})=\det(M)^{-1}\det(K)\det(M)^{-1}$ equals the divisor of $\det(K)$ up to the square of a unit. The units of the ring of automorphic functions $T$ with poles restricted to $\Gamma=\bigcup_{j\in\zn{p}}\Gamma_j$ are the nonzero monomials in $\mb{I}_{j,j'}$, $j,j'\in\zn{p}$. If $p=1$ then these are simply the nonzero constants and the divisor of zeros of $\det(K)$ is invariant.
\end{proof}

\begin{Corollary}
Let $\alia$ be a hereditary Automorphic Lie Algebra with root system $\roots$ and let $\roots_i$, $i\in\Omega$, be the associated exceptional root systems (\ref{eq:roots i}). Then
\[\dim_{\mb{C}}\bigslant{\alia}{[\alia,\alia]}=\sum_{\Gamma_i \not\subset\Gamma}\rank\roots-\rank\roots_i,\]
i.e. the abelianisation of the Automorphic Lie Algebra equals the direct sum of the abelianisations of the Lie algebras $\mf{g}^{G_\lambda}$, where the sum ranges over orbits $G\lambda$ not in $\Gamma$. 
\end{Corollary}
\begin{proof}
From the action of the constant spectrum CSA it is clear that $[\alia,\alia]$ contains $Tx_\alpha$ for all $\alpha\in\roots$. To see which subspace of the CSA is contained in $[\alia,\alia]$ one can use (\ref{eq:bracket of root vectors}) and Theorem \ref{thm:killing form}.
\end{proof}
Theorem \ref{thm:killing form} and its corollaries show that the root subsystems $\roots_i$, which are easily obtained from the group action due to Lemma \ref{lem:kernels}, determine the structure of the Automorphic Lie Algebra to a large extent. They do not tell the whole story though. That is, $\omega^2$ cannot be determined from $\roots_i$, and the analysis of this paper is not sufficient to determine $\omega^2$ from the group action. What we do find, as a final results in this section, is that the options for $\omega^2$ are rather limited: the structure constants (\ref{eq:bracket of root vectors}) have no squares. In low dimensions this can fix $\omega^2$ completely, a phenomenon that can be seen as the origin of the isomorphism conjecture discussed in earlier literature on Automorphic Lie Algebras.
\begin{Theorem}
The $2$-cocycle $\omega^2$ in (\ref{eq:bracket of root vectors}) takes all its values in $\{0,1\}^3$. 
\end{Theorem}
\begin{proof} Let $\alpha,\beta,\alpha+\beta \in \roots$. Then the cocycle condition (\ref{eq:cocycle condition}) on $\omega^2$, with $\gamma=-\beta$, reads
\begin{align*}
0
&=\omega^2(\beta,-\beta)-
\omega^2(\alpha+\beta,-\beta)+
\omega^2(\alpha,0)-
\omega^2(\alpha,\beta)
\\&=\omega^2(\beta,-\beta)-
\omega^2(\alpha+\beta,-\beta)-\omega^2(\alpha,\beta).
\end{align*}
Here we see that if $\omega^2(\alpha,\beta)_i\ge 2$ and $\omega^2(\alpha+\beta,-\beta)_i\ge 0$ then $\omega^2(\beta,-\beta)_i\ge 2$ which contradicts Theorem (\ref{thm:killing form}).
\end{proof}

\section{Concluding Remarks}
In this paper we have proven the existence of a $2$-cocycle $\omega^2$ on the root system, which, together with a Chevalley basis, completely describes a hereditary Automorphic Lie Algebra. This result paves the way to deeper understanding of the algebraic structure of Automorphic Lie Algebras. 
It also prompts further interesting questions; on the one hand, one would like to know which Automorphic Lie Algebras are hereditary. Explicit computations have shown that this class contains enough examples to justify its study. The optimistic guess would be that all Automorphic Lie Algebras with inner reduction group are hereditary. This statement is supported by the observation that no counter example has been found so far.
A second open problem relates to the complete description of the $2$-cocycle $\omega^2$. In this paper we  obtain partial results in this direction. Related to this is the classification $2$-cocycle with values $0$ and $1$ only. In the smallest case this already explains the isomorphisms between Automorphic Lie Algebras.

It is interesting to note that $2$-cocycles on root systems have not been studied in Lie theory before, as far as the authors are aware, despite the rather simple definition. The work of Kac on automorphisms of finite order and twisted loop algebras is closely related. It is only because loop algebras have poles in both exceptional orbits of the cyclic reduction group that the structure of the $2$-cocycle, which is hiding there, gets lost.

One might wonder about a possible connection between the usual Lie algebra cohomology and the root system cohomology.
But the differences both on the microscopic level (additive versus multiplicative, different symmetric versus antisymmetric behaviour) as well as on the macroscopic level (the acyclicity of the second cohomology in the simple case is a promising similarity, only to find that the first cohomology is trivial for a simple Lie algebra, but nontrivial for the root system), led us to believe that these are distinct cohomology theories.

\appendix
\section{Examples illustrating Theorem \ref{thm:determinant of invariant vectors}}
\label{sec:examples}
The numbers $\kappa(\chi)_i$ appear in the theory of Automorphic Lie Algebras due to Theorem \ref{thm:determinant of invariant vectors}. We list them in Table \ref{tab:1/2codimV^g}. Notice that  $\nu_i \kappa(\chi)_i$, $i\in\Omega$, are integers if and only if $\chi$ is real valued. The $\kappa$'s found in the latest developments \cite{knibbeler2017higher} on Automorphic Lie Algebras can be constructed from the table, by collecting the irreducible characters involved.

Table \ref{tab:1/2codimV^g} is determined using the characters of the polyhedral groups. Therefore we will present these first. 
Consider the dihedral group $\dg{N}=\langle r, s\;|\;r^N=s^2=(rs)^2=1 \rangle$.
If $N$ is odd then $\dg{N}$ has two one-dimensional characters, $\chi_1$ and $\chi_2$. If $N$ is even there are two additional one-dimensional characters, $\chi_3$ and $\chi_4$. 
\begin{center}
\begin{table}[ht!]
\caption{One-dimensional characters of $\dg{N}$.}
\label{eq:D_N one-dimensional characters}
\begin{center}
\begin{tabular}{ccccc}
\hline
 $\theta$ & $\chi_1$ & $\chi_2$ & $\chi_3$ &$\chi_4$\\
\hline
$r$ & $1$ & $1$ & $-1$ & $-1$\\
$s$ & $1$ & $-1$ & $1$ & $-1$\\
\hline
\end{tabular}
\end{center}
\end{table}
\end{center}
The remaining irreducible characters are two-dimensional (indeed, there is a normal subgroup of index $2$: $\cg{N}$, \cite{serre1977linear}). We denote these characters by $\psi_j$, for $1\le j <\nicefrac{N}{2}$. They take the values
\begin{equation}
\label{eq:D_N two-dimensional characters}
\psi_j(r^i)=\zeta_N^{ji}+\zeta_N^{-ji},\qquad \psi_j(sr^i)=0\,,
\end{equation}
where $\zeta_N= e^{\frac{2\pi i}{N}}.$

The characters of the binary tetrahedral group are collected in Table \ref{tab:ctbt} above. Below we give the character tables for $\bo$ and $\by$. For this last table we define the golden section $\phi^+$ and its conjugate $\phi^-$ in $\mb{Q}(\sqrt{5})$
\[\phi^\pm= \frac{1\pm\sqrt{5}}{2}.\] 
\begin{center}
\begin{table}[ht!]
\caption{Irreducible characters of the binary octahedral group $\bo$.}
\label{tab:ctbo}
\begin{center}
\begin{tabular}{ccccccccccccc}\hline
$\theta$& $1$ & $\tga$ & $\tbe^2$ & $\tal^2$ & $z$ & $\tal^3$ & $\tbe$ & $\tal$ \\
$|C_G(\theta)|$&$48$&$4$&$6$&$8$&$48$&$8$&$6$&$8$\\
\hline
$\boi$ & $1$ &$1$&$1$&$1$&$1$&$1$&$1$&$1$\\
$\boii$ & $1$ & $-1$ & $1$ & $1$ & $1$ &$-1$&$1$&$-1$ \\
$\boiii$ & $2$ &$0$&$-1$&$2$&$2$&$0$&$-1$&$0$\\
$\underline{\boiiii}$ & $2$ &$0$&$-1$&$0$&$-2$&$-\sqrt{2}$&$1$&$\sqrt{2}$\\
$\boiiiii$ & $2$ &$0$&$-1$&$0$&$-2$&$\sqrt{2}$&$1$&$-\sqrt{2}$\\
$\boiiiiii$ & $3$ &$1$&$0$&$-1$&$3$&$-1$&$0$&$-1$\\
$\boiiiiiii$ & $3$ &$-1$&$0$&$-1$&$3$&$1$&$0$&$1$\\
$\boiiiiiiii$ & $4$ &$0$ & $1$ & $0$ & $-4$ & $0$ & $-1$ & $0$\\\hline
\end{tabular}
\end{center}
\end{table}
\end{center}

\begin{center}
\begin{table}[ht!]
\caption{Irreducible characters of the binary icosahedral group $\by$.}
\label{tab:ctby}
\begin{center}
\begin{tabular}{cccccccccccccc} \hline
$\theta$& $1$ & $\tal^2$ &$\tal^4$&$\tbe$&$\tga$&$\tbe^2$&$\tal^3$&$z$&$\tal$\\
$|C_G(\theta)|$&$120$&$10$&$10$&$6$&$4$&$6$&$10$&$120$&$10$\\
\hline
$\byi$ & $1$ &$1$&$1$&$1$&$1$&$1$&$1$&$1$&$1$\\
$\underline{\byii}$ & $2$ &$-\phi^-$&$-\phi^+$&$1$&$0$&$-1$&$\phi^-$&$-2$&$\phi^+$\\
$\byiii$ & $2$ &$-\phi^+$&$-\phi^-$&$1$&$0$&$-1$&$\phi^+$&$-2$&$\phi^-$\\
$\byiiii$ & $3$ &$\phi^+$&$\phi^-$&$0$&$-1$&$0$&$\phi^+$&$3$&$\phi^-$\\
$\byiiiii$ & $3$ &$\phi^-$&$\phi^+$&$0$&$-1$&$0$&$\phi^-$&$3$&$\phi^+$\\
$\byiiiiii$ & $4$&$-1$&$-1$&$1$&$0$&$1$&$-1$&$4$&$-1$\\
$\byiiiiiii$ & $4$&$-1$&$-1$&$-1$&$0$&$1$&$1$&$-4$&$1$\\
$\byiiiiiiii$ & $5$&$0$&$0$&$-1$&$1$&$-1$&$0$&$5$&$0$\\
$\byiiiiiiiii$ & $6$&$1$&$1$&$0$&$0$&$0$&$-1$&$-6$&$-1$\\\hline
\end{tabular}
\end{center}
\end{table}
\end{center}
\newcommand{\w}{3.5}
\begin{center}
\begin{table}[ht!]
\caption{Rows $3$, $4$ and $5$ display $\kappa(\chi)_i$ and rows $6$, $7$ and $8$ display $\nu_i \kappa(\chi)_i$.}
\label{tab:1/2codimV^g}
\begin{center}
\begin{tabular}{p{\w mm}p{\w mm}p{\w mm}p{\w mm}p{\w mm}p{\w mm}p{\w mm}p{\w mm}p{\w mm}p{\w mm}p{\w mm}p{\w mm}p{\w mm}p{\w mm}p{\w mm}p{\w mm}} \hline
$ $&$\chi_2$&$\chi_3$&$\chi_4$&$\psi_j$&$\btii$&$\btiii$&$\btiiiiiii$&$
\boii$&$\boiii$&$\boiiiiii$&$\boiiiiiii$&$\byiiii$&$\byiiiii$&$\byiiiiii$&$\byiiiiiiii$\\
\hline
 &$1$&$1$&$1$&$2$&$1$&$1$&$3$&$1$&$2$&$3$&$3$&$3$&$3$&$4$&$5$\\
\hline
$\al$&$0$&$\nicefrac{1}{2}$&$\nicefrac{1}{2}$&$1$&$\nicefrac{1}{2}$&$\nicefrac{1}{2}$&$1$&
$\nicefrac{1}{2}$&$\nicefrac{1}{2}$&$\nicefrac{3}{2}$&$1$&$1$&$1$&$2$&$2$\\
$\be$&$\nicefrac{1}{2}$&$\nicefrac{1}{2}$&$0$&$\nicefrac{1}{2}$&$\nicefrac{1}{2}$&$\nicefrac{1}{2}$&$1$
&$0$&$1$&$1$&$1$&$1$&$1$&$1$&$2$\\
$\ga$&$\nicefrac{1}{2}$&$0$&$\nicefrac{1}{2}$&$\nicefrac{1}{2}$&$0$&$0$&$1$&
$\nicefrac{1}{2}$&$\nicefrac{1}{2}$&$\nicefrac{1}{2}$&$1$&$1$&$1$&$1$&$1$\\
\hline 
$\al$&$0$&$\nicefrac{N}{2}$&$\nicefrac{N}{2}$&$N$&$\nicefrac{3}{2}$&$\nicefrac{3}{2}$&$3$&
$2$&$2$&$6$&$4$&$5$&$5$&$10$&$10$\\
$\be$&$1$&$1$&$0$&$1$&$\nicefrac{3}{2}$&$\nicefrac{3}{2}$&$3$
&$0$&$3$&$3$&$3$&$3$&$3$&$3$&$6$\\
$\ga$&$1$&$0$&$1$&$1$&$0$&$0$&$2$&
$1$&$1$&$1$&$2$&$2$&$2$&$2$&$2$\\
\hline 
\end{tabular}
\end{center}
\end{table}
\end{center}

\begin{Example}[Determinants of $\dg{N}$-invariant vectors]\label{ex:determinant of dihedral invariant vectors}
For the dihedral group we have explicit descriptions for the isotypical components $\mb{C}[X,Y]^\chi$ at hand; they are explicitly computed in both \cite{knibbeler2014automorphic} and \cite{knibbeler2014invariants}. Here we give the results without derivation.

The characters of $\dg{N}=\langle r, s\;|\;r^N=s^2=(rs)^2=1 \rangle$ are given above. We will choose bases in which $r$ acts diagonally, so that the matrices for the generators in the representation $\psi_j$ become
\begin{equation}
\label{eq:standard matrices}
 \rho(r)=\begin{pmatrix}\zeta_N^{{j}}&0\\0&\zeta_N^{N-{j}}\end{pmatrix} ,\qquad \rho(s)=\begin{pmatrix}0&1\\1&0\end{pmatrix},
\end{equation}
where $\zeta_N=e^{\frac{2\pi i}{N}}$.
By confirming that the group relations $\rho_r^N=\rho_s^2=(\rho_r\rho_s)^2=\Id$ hold and that the trace $\psi_j=\tr \circ\rho$ is given by (\ref{eq:D_N two-dimensional characters}) one can check that this is indeed the correct representation. 

Suppose $\{X,Y\}$ is a basis for $V_{\psi_1}^\ast$, corresponding to (\ref{eq:standard matrices}).
One then finds the relative invariant forms
\begin{equation}
\label{eq:Fs}
\fal=XY,\qquad\fbe=\frac{X^N+Y^N}{2},\qquad \fga=\frac{X^N-Y^N}{2},
\end{equation} 
and equivariant vectors as presented in Table \ref{tab:invariants, N odd} and Table \ref{tab:invariants, N even}. 
\begin{table}[ht!]
\caption{Module generators $\eta_i$ in $(V_\chi\otimes \mb{C}[X,Y])^{\dg{N}}=\bigoplus_i\mb{C}[\fal,\fbe]\eta_i$, $N$ odd.}
\label{tab:invariants, N odd}
\begin{center}
\begin{tabular}{cccc}\hline
$\chi$ & $\chi_1$ & $\chi_2$ & $\psi_j$\\
\hline
$\eta_i$ & $1$ & ${\fga}$ & $\begin{pmatrix}X^j\\Y^j\end{pmatrix},\;\begin{pmatrix}Y^{N-j}\\X^{N-j}\end{pmatrix}$\\\hline
\end{tabular}
\end{center}
\end{table}
\begin{table}[ht!]
\caption{Module generators $\eta_i$ in $(V_\chi\otimes \mb{C}[X,Y])^{\dg{2N}}=\bigoplus_i\mb{C}[\fal,\fbe^2]\eta_i$.}
\label{tab:invariants, N even}
\begin{center}
\begin{tabular}{cccccc}\hline
$\chi$ & $\chi_1$ & $\chi_2$ & $\chi_3$ & $\chi_4$ & $\psi_j$\\
\hline
$\eta_i$ & $1$ & $\fbe\fga$ & $\fbe$ & $\fga$ & $\begin{pmatrix}X^j\\Y^j\end{pmatrix},\;\begin{pmatrix}Y^{2N-j}\\X^{2N-j}\end{pmatrix}$\\\hline
\end{tabular}
\end{center}
\end{table}

Now we can readily check Theorem \ref{thm:determinant of invariant vectors} for $\dg{N}$ by computing all the determinants and compare in each case the exponents of $\fal$, $\fbe$ and $\fga$ to the relevant column in Table \ref{tab:1/2codimV^g}.  First notice that all representations of $\dg{N}$ are of real type.

We start with $\chi_2$. If $N$ is odd then $\mb{C}[X,Y]^{\chi_2}_{2N}=\left(\mb{C}[\fal,\fbe]\fga\right)_{2N}=\mb{C}\fbe\fga$. If $N$ is even, we use the extension $G^\flat=\dg{2N}$ and also find $\mb{C}[X,Y]^{\chi_2}_{2N}=\left(\mb{C}[\fal,\fbe^2]\fbe\fga\right)_{2N}=\mb{C}\fbe\fga$. The exponents $(0,1,1)$ are indeed identical to $(\nal \kappa(\chi_2)_\al,\, \nbe \kappa(\chi_2)_\be,\, \nga \kappa(\chi_2)_\ga)$ as found in the $\chi_2$-column of Table~\ref{tab:1/2codimV^g}.

For $\chi_3$ and $N$ even, $G^\flat=\dg{2N}$, one finds $\mb{C}[X,Y]^{\chi_3}_{2N}=\left(\mb{C}[\fal,\fbe^2]\fbe\right)_{2N}=\mb{C}\fal^{\frac{N}{2}}\fbe$, and the last linear character $\chi_4$ gives  $\mb{C}[X,Y]^{\chi_4}_{2N}=\left(\mb{C}[\fal,\fbe^2]\fga\right)_{2N}=\mb{C}\fal^{\frac{N}{2}}\fga$, also in agreement with Table \ref{tab:1/2codimV^g}.

Now we consider the two-dimensional representations, when $N$ is odd;
\begin{align*}
\det \mb{C}[X,Y]^{\psi_j}_{2N}&=\det\left(\mb{C}[\fal,\fbe]\begin{pmatrix}X^j&Y^j\\Y^{N-j}&X^{N-j}\end{pmatrix}\right)_{2N}\\[2mm]
&=\left\{
\begin{array}{ll}
\mb{C}\det\begin{pmatrix}\fal^{\frac{N-j}{2}}\fbe X^j&\fal^{\frac{N-j}{2}}\fbe Y^j\\\fal^{\frac{N+j}{2}} Y^{N-j}&\fal^{\frac{N+j}{2}}X^{N-j}\end{pmatrix}= \mb{C}\fal^N\fbe\fga& j\text{ odd,} \\[6mm]
\mb{C}\det\begin{pmatrix}\fal^{\frac{2N-j}{2}} X^j&\fal^{\frac{2N-j}{2}}Y^j\\\fal^{\frac{j}{2}}\fbe Y^{N-j}&\fal^{\frac{j}{2}}\fbe X^{N-j}\end{pmatrix}=\mb{C}\fal^N\fbe\fga& j\text{ even.}
\end{array}\right.
\end{align*}

If $N$ is even and $G^\flat=\dg{2N}$, we only consider $2j$ because the statement of Theorem \ref{thm:determinant of invariant vectors} only concerns representations of $G$, not for instance spinorial representation of $G^\flat$;
\begin{align*}
\det \mb{C}[X,Y]^{\psi_{2j}}_{2N}&=\det\left(\mb{C}[\fal,\fbe^2]\begin{pmatrix}X^{2j}&Y^{2j}\\Y^{2N-{2j}}&X^{2N-{2j}}\end{pmatrix}\right)_{2N}\\[2mm]
&=
\mb{C}\det\begin{pmatrix}\fal^{N-j} X^{2j}&\fal^{N-j} Y^j\\\fal^{j} Y^{2N-2j}&\fal^{j}X^{2N-2j}\end{pmatrix}=\mb{C}\fal^N (X^{2N}-Y^{2N})=\mb{C}\fal^N\fbe\fga.
\end{align*}
This confirms Theorem \ref{thm:determinant of invariant vectors} for all representations of the dihedral group.
\end{Example}

The other examples that are feasible to do by hand are the remaining one-dimensional characters. These include the only representations of the polyhedral groups that are not real valued, namely $\btii$ and $\btiii$, cf.~Table \ref{tab:ctbt}.
\begin{Example}[One-dimensional characters]\label{ex:determinant of invariant forms}
Let the characters of the ground forms be indexed by $\Omega$, \[\theta F_i=\chi_i(\theta)F_i,\qquad i\in\Omega.\] 
For the tetrahedral group, the degrees of the ground forms are $\dal=\dbe=4$ and $\dga=6$, cf.~Table \ref{tab:various properties of polyhedral groups}. Necessarily $\chi_\al=\btii$ and $\chi_\be=\btiii$, or the other way around. There is no way to distinguish, (but they cannot be equal because there is a degree 8 invariant, which can be deduced using the Molien series) 
so we take the first choice. The last ground form is invariant, $\chi_\ga=\bti$, because $\chi_\ga^\nga=\bti$,  $\nga=2$, and there is no element of order two in $\mc{A}\tg=\cg{3}$. We get $\mb{C}[X,Y]^{\bti}=\mb{C}[\fal\fbe,\fga](\fal^3+ \fbe^3)$ and $\mb{C}[X,Y]^{\btii}=\mb{C}[\fal\fbe,\fga](\fal\oplus \fbe^2)$ and $\mb{C}[X,Y]^{\btiii}=\mb{C}[\fal\fbe,\fga](\fal^2\oplus\fbe)$, using for instance generating functions. 
Thus 
\begin{align*}
&\mb{C}[X,Y]^{\btii}_{12}=\left(\mb{C}[\fal\fbe,\fga](\fal\oplus \fbe^2)\right)_{12}=\mb{C}\fal^2\fbe,\\
&\mb{C}[X,Y]^{\btiii}_{12}=\left(\mb{C}[\fal\fbe,\fga](\fal^2\oplus\fbe)\right)_{12}=\mb{C}\fal\fbe^2,\\
&\det \mb{C}[X,Y]^{\btii+\btiii}_{12}=\mb{C}[X,Y]^{\btii}_{12}\mb{C}[X,Y]^{\btiii}_{12}=\mb{C}\fal^3\fbe^3.
\end{align*}
This example illustrates the necessity to consider $\chi+\chi^*$ rather than simply $\chi$ in Theorem \ref{thm:determinant of invariant vectors}.

The remaining one-dimensional character to check is $\boii$. We have $\chi_\al=\boii$, $\chi_\be=\boi$ and $\chi_\ga=\boii$ and
$\mb{C}[X,Y]^{\boii}_{24}=\left(\mb{C}[\fal^2,\fbe](\fal\oplus\fga)\right)_{24}=\mb{C}\fal^2\fga$,
in correspondence with the $\boii$-column in Table \ref{tab:1/2codimV^g}.
\end{Example}

\bibliography{bibliography} 
\bibliographystyle{amsplain}

\end{document}